\newif\ifappendix
\newcommand{\refappendix}[1]{\ifappendix
 Appendix~\ref{#1}\xspace
\else
 the supplementary material\xspace
\fi}
\newtheorem{theorem}{Theorem}[section]
\newtheorem{lemma}[theorem]{Lemma}
\newtheorem{definition}[theorem]{Definition}
\newtheorem{corollary}[theorem]{Corollary}
\crefname{theorem}{Theorem}{Theorems}
\Crefname{lemma}{Lemma}{Lemmas}
\Crefname{observation}{Observation}{Observations}
\Crefname{equation}{}{}
\crefname{cl}{Claim}{Claims}
\crefname{obs}{Observation}{Observations}
\crefname{conj}{Conjecture}{Conjectures}
\crefname{lemma}{Lemma}{Lemmas}
\crefname{rem}{Remark}{Remarks}
\newcommand{\cM}{\mathcal{M}}
\newcommand{\opt}{\textrm{Opt} }
\newcommand{\alg}{\textrm{Alg} }
\newcommand{\dist}{\textrm{Dist} }
\newcommand{\mediankanonymization}{smooth-$k$-anonymization\xspace}
\newcommand{\mediankanonymity}{smooth-$k$-anonymity\xspace}
\newcommand{\mediankanonymous}{smooth-$k$-anonymous\xspace}
\title{Smooth Anonymity for Sparse Graphs}
\author{
 Hossein Esfandiari\\
 Google Research
\and
 Alessandro Epasto,\\
 Google Research
\and
 Vahab Mirrokni,  \\
 Google Research
\and
 Andres Munoz Medina,\\
 Google Research
}
\begin{document}
\sloppy

\maketitle

\begin{abstract}
When working with user data providing well-defined privacy guarantees is paramount. In this work, we aim to manipulate and share an entire sparse dataset with a third party privately. In fact, differential privacy has emerged as the gold standard of privacy, however, when it comes to sharing sparse datasets, e.g. sparse networks, as one of our main results, we prove that \emph{any} differentially private mechanism that maintains a reasonable similarity with the initial dataset is doomed to have a very weak privacy guarantee. In such situations, we need to look into other privacy notions such as $k$-anonymity. In this work, we consider a variation of $k$-anonymity, which we call smooth-$k$-anonymity, and design simple large-scale algorithms that efficiently provide smooth-$k$-anonymity. We further perform an empirical evaluation to back our theoretical guarantees and show that our algorithm improves the performance in downstream machine learning tasks on anonymized data.
\end{abstract}

\section{Introduction}
When working with user data, maintaining user privacy is absolutely essential. 
In this work we study a situation where we intend to share an entire (manipulated) dataset, without violating user privacy. Then the dataset might be used by the public for several different purposes. Hence, to measure the accuracy regardless of the downstream task, we use a general purpose metric to measure the similarity of the initial dataset with the shared dataset. To measure the privacy there are a large body of work that attempt to provide formal privacy measures. At a high level, there are two distinct approaches to quantifying privacy, {\em differential privacy} and {\em $k$-anonymity}.

Differential privacy is a property of a data processing algorithm and it ensures that small changes in input (typically the presence or absence of any individual user) lead to minimal changes in the output. All differentially private algorithms are randomized, and the uncertainty introduced by the randomization provides a layer of protection. 
On the other hand, $k$-anonymity is a property of the dataset. To make a dataset $k$-anonymous one either generalizes or removes data that is identifiable, so that in the final dataset any information is shared by at least $k$ distinct users. Both approaches have their own pros and cons, which we briefly discuss next. (We defer the formal definitions to Section \ref{sec:setup}.) 

The main advantage of differential privacy is that the output of a differentially private algorithm remains such even in the face of arbitrary post-processing by an adversary armed with additional side information about the users. This is one reason why it has emerged as the gold standard of privacy.
However, as we share more information the differential privacy measure gets weaker. In this work we prove that sharing sparse binary matrices with differential privacy guarantees is infeasible (See Theorem~\ref{thm:hardness:general}). Roughly speaking, we prove that any differentially private algorithm either provides a very weak privacy guarantee, or significantly changes the dataset, destroying the underlying signal. 

On the other hand, $k$-anonymity \cite{sweeney2002k} is a popular pre-processing technique that can be used to provide some level of privacy. While $k$-anonymity can be vulnerable to certain attacks \cite{Ganta2008attacks}, it still provides meaningful guarantees when adversaries have limited access to side information \cite{BassilyGKS13coupled}. Moreover, in cases where a data analyst cannot withstand noise, it still represents a formal way to give privacy protections. 
Making a dataset $k$-anonymous while best preserving utility is an  NP-hard problem~\cite{aggarwal2005approximation}. Current approximation algorithms offer the guarantee of removing at most $O(\log(k))$ times more elements than that of an optimal solution, however, such a bound is vacuous when the optimal solution has to remove a constant fraction of the dataset (or  anything smaller than a $1-O\left(\frac{1}{\log\left(k\right)}\right)$ fraction). In those cases the algorithm that just returns a null dataset achieves the same guarantee. 

In this work, we strive to design an approach for sharing a binary matrix, while respecting the privacy of the users. 
In order to do this we study a variant of $k$-anonymity (called \mediankanonymity). Then we provide a polynomial-time approximation algorithm for \mediankanonymity in binary matrices and in theory improve the approximation guarantees of the state of the art results for $k$-anonymization.

In the binary matrix representation, each row represents the data of one user and each column corresponds to a feature, and if the user $u$ has the feature $f$, element $(u,f)$ in the matrix is $1$. This representation captures the following common setups: \\
    {\bf Bipartite Graphs:} The nodes of one side correspond to the users and the nodes of the other side corresponds to the features. If user $u$ has feature $f$, there is an edge between $u$ and $f$.\\
    {\bf User Lists:} We have a collection of lists of users, and each list is associated with a feature. If user $u$ has feature $f$, user $u$ exists in the list associated with $f$.\\
    {\bf Points in a Binary Space:} Each user is associated with a point. The coordinates of the point are equivalent to the respective row in the matrix representation.

\section{Related work}
The problem of anonymizing data is very well studied. One of the first techniques for anonymizing data sets was $k$-anonymity~\cite{sweeney2002k}. This notion was intended for tabular data where each row corresponds to a user and each column corresponds to a particular feature. The authors define $k$-anonymity in terms of quasi-identifiers. That is, columns in the data set that, combined, could single out a user. A $k$-anonymous dataset is one where every user is indistinguishable from $k$-other users with respect to the quasi-identifier set only (that means that the columns not corresponding to quasi-identifiers are not anonymized). Other works have improved upon this definition by enforcing other restrictions such as requiring $l$-diversity~\cite{machanavajjhala2007diversity} or $t$-closeness~\cite{li2007t} for non quasi-identifiers, on top of $k$-anonymity for quasi-identifiers. The choice of quasi-identifiers is crucial since an attacker with just a small amount of information about a user could de-anonymize a dataset~\cite{NarayananS08}.


The majority of work on $k$-anonymity has been focused on finding the optimal $k$-anonymous dataset. That is, one that approximates the original data the best. \cite{MeyersonW04} showed that this task is in fact NP-hard, although it admits a $O(k \log k)$ approximation  with running time exponential in $k$. 
Later on, \cite{aggarwal2005approximation} obtained a polynomial time approximation of $O(k)$ which was improved by \cite{kenig2012practical,park2010approximate} to  a $O(\log k)$ approximation. Several variants of these algorithms including using set cover approximations~\cite{wang2010anonymizing}.
In addition to these algorithms with provable guarantees, other work has provided heuristics for different notions of anonymization. For instance~\cite{lefevre2006mondrian} has defined a heuristic algorithm for $k$-anonymization of quasi-identifiers based on the construction similar to that of kd-trees~\cite{friedman1977algorithm}. Other authors have defined heuristics based on clustering~\cite{byun2007efficient,zheng2018k}. None of those methods have provable guarantees in our context. 

Another related work to ours is that of~\cite{Cheng10}, which defines the notion of $k$-isomorphism in social network graphs. Essentially, a graph is $k$-isomorphic if it can be decomposed into a union of $k$ distinct isomorphic sub-graphs. This notion of anonymity is limited to social network graphs as the goal is to prevent an attacker from identifying a user based on the structure of their neighborhood. 

Another framework for achieving anonymity is differential privacy~\cite{dwork}. Unlike $k$-anonymity, differential privacy provides mathematical guarantees on the amount of information that can be gained by an attacker that observes a differentially private dataset. Differential privacy has been effectively applied for statistics release~\cite{DworkS10} and empirical risk minimization~\cite{ChaudhuriMS11}  among many other scenarios. The vast majority of differential privacy examples require the mechanism to output a summarized version of the data: a statistic or a model in the case of risk minimization. To release a full dataset in a differentially private manner, \cite{KasiviswanathanLNRS11} introduces the notion of local differential privacy. Local differential privacy allows us to release a full dataset while protecting the information of all users.

Methods from differential privacy have also been used for the release of private graph information. For instance, \citet{NissimRS07} shows how to compute the minimum spanning trees and the number of triangles in a graph. \citet{eliavs2020differentially} recently showed how to preserve cuts in graphs with differential privacy.

 \citet{nodeprivacy} introduce the notions of edge and node differential privacy in graph settings and show how to calculate functions over graphs under node differential privacy by capping the number of edges per node. Arguably the work most related to this paper is that of \citet{NguyenIR16}. The authors propose edge-differential privacy in order to release an anonymous graph. Similar to our results in Corollary~\ref{cor:high_eps}, the authors show that a value of $\epsilon$ in $\Omega(\log n)$ is needed in order to achieve non-trivial utility guarantees, where $n$ is the number of nodes in the graph. 

We observe that some work has been devoted to combining differential privacy with $k$-anonymity guarantees. ~\citet{li2011provably} show that enforcing $k$-anonymity in certain data-oblivious ways on a sub-sampled dataset, is sufficient to show differential privacy guarantees.

Our algorithmic techniques are related to the lower bounded facility location problem~\cite{ahmadian2012improved,guha2000hierarchical,svitkina2010lower}. This problem has been first introduced and studied independently by ~\citet{karget2000building} and~\citet{guha2000hierarchical}. Lower bounded clustering problems have been motivated by privacy purposes~\cite{motwani2008anonymizing}. They both provide bicriteria approximation algorithms for this problem. Later, ~\citet{svitkina2010lower} provided a $448$-approximation algorithm for this problem. This is the first constant approximation algorithm for this problem.~\citet{ahmadian2012improved} improved Svitkina's result and give an $82.6$ approximation algorithm for this problem. To the best of our knowledge the latter is the best approximation algorithm for the lower bounded facility location problem. 

\section{Setup}
\label{sec:setup}
Given the equivalence of binary matrices and bipartite graphs, for ease of notation we mostly use graph theoretical terminology to describe our work. We assume we are given a bipartite graph, where one set of nodes corresponds to users and another set of nodes corresponds to features. This is a common modeling step, for instance in location analysis applications the features may represent places visited; in social network modeling, the features may represent interests shared by different users; and so on. 

Let $U = \{u_1, \ldots, u_n\}$ denote a set of users and $F= \{f_1,\ldots, f_m\}$ a set of features. Throughout the paper we use $n$ and $m$ as the $|U|$ and $|F|$, respectively. The edge set $E$  of the graph is defined as follows, given $u \in U$ and $f\in F$, we say $e = (u, f) \in E$ if user $u$ is associated with item $f$. We denote this graph by $G = (U \cup F, E)$. Let $\mathbb{G}$ denote the space of all bipartite graphs over $U\cup F$, a mechanism $\cM \colon \mathbb{G} \to \mathbb{G}$ is a (possibly randomized) function that maps $G = (U \cup F, E)$ to another graph $G' = (U \cup F, E')$ with the same set of nodes but with possibly different edges. Given two sets $A$ and $B$ we denote their symmetric difference by $A \oplus  B$.

We now introduce the different notions of privacy we will be using throughout the paper. 

\begin{definition}{Edge differential privacy.}
We say a randomized mechanism $\cM$ preserves $\epsilon$-edge differential privacy if for any two graphs $G = (U \cup F, E)$ and $G' = (U \cup F, E')$ such that $|E \oplus E'| = 1$ the following holds for all $A \subset \mathbb{G}$:
\begin{equation*}
    P(\cM(G) \in A) \leq e^{\epsilon} P(\cM(G') \in A),
\end{equation*}
\end{definition}
Edge differential privacy implies that the output of a mechanism does not change too much if a single edge of the input graph is changed. Thus an adversary that observes the output of $\cM(G)$ may not be able to infer if a single edge was present or not in the graph. 

However, if a user has a high degree in $G$, then the output of an edge-differentially private algorithm may still leak information about the presence or absence of that user in the graph. This leads to a definition of {\em node-differential privacy} which we detail below. 

\begin{definition}
We say that two graphs $G=(U \cup F, E), G'= (U \cup F, E') \in \mathbb{G}$ are node neighboring if there exists $u \in U$ such that $|E \oplus E'| = |\{e \in E \colon e = (u, f) \ \text{for }\ f \in F\} \oplus \{e \in E' \colon e = (u, f) \ \text{for }\ f \in F\} $
\end{definition}

That is, two graphs are node neighboring if one can be obtained from the other by replacing all the edges of a single user. 
\begin{definition}[Node differential privacy]
We say a mechanism $\cM$ preserves node differential privacy if for any two node neighboring graphs $G$ and $G'$ and for all $A \subset \mathbb{G}$
\begin{equation*}
    P(\cM(G) \in A) \leq e^{\epsilon} P(\cM(G') \in A).
  \end{equation*}
\end{definition}
Under this notion of anonymity, it is very unlikely for an adversary to identify a user in a particular dataset. While the above notions of differential privacy provide quantifiable protection against an attacker, as we will see, they also require adding a non-trivial amount of noise. 

For this reason we revisit an older notion of privacy: $k$-anonymity.  While the original definition of $k$-anonymity~\cite{sweeney2002k} requires defining quasi-identifiers, in this work we assume that every feature can be used as a quasi-identifier.  

We first introduce some notation. We will consider graphs in $\mathbb{G}$ with fixed node sets $U \cup F$, and varying edge sets. Let $G = (U \cup F, E) \in \mathbb{G}$ be one such graph, notice that the graph is identified by $E$. For a given edge set $E$, let $F_u(E) = \{f \in F \colon (u, f) \in E\}$ be the items associated with $u$ in the set edge set $E$. Notice we can then partition users into equivalence classes. Formally, let 
$$C_u(E) = \{ u' \in U | F_u(E) \equiv F_{u'}(E) \}.$$
%
Now we are ready to formally define $k$-anonymity by suppression.
\begin{definition}[$k$-anonymization and $k$-anonymization by suppression]
\label{def:anonfull}
A mechanism $\cM$ is $k$-anonymous if for any graph $G = (U \cup F, E) \in \mathbb{G}$,  $\cM(G) = (U \cup F, E')$ satisfies:
\begin{enumerate}\itemsep=0in
    \item For every $u\in U$, $|C_u(E')| \geq k$.
\end{enumerate}
The mechanism $\cM$ is $k$-anonymous by suppression if it also satisfies
\begin{enumerate}\itemsep=0in

    \item $E' \subset E$
\end{enumerate}
\end{definition}
That is the set of items associated with each user in the output graph, is the same of that of at least $k$  users. Moreover, in $k$-anonymity with suppression the output set of edges $E'$ needs to be a subset of $E$. Notice that with a $k$-anonymous output an adversary can only distinguish a user up to a set of $k$ different people. 

Finally, we introduce our variant of the above definition. 

\begin{definition}[\mediankanonymity]
\label{def:majorityanon}
A mechanism $\cM$ is \mediankanonymous if for any graph $G = (U \cup F, E) \in \mathbb{G}$,  $\cM(G) = (U \cup F, E')$ satisfies:
\begin{enumerate}\itemsep=0in
    \item For every $u \in U$, $|C_u(E')| \geq k$.
    \item For every $u \in U$, and every $f\in F$, $(u,f) \in E'$ implies $|\{u' \in C_u(E') \colon (u',f) \in E' \}| \ge \nicefrac{|C_u(E')|}{2}$
\end{enumerate}
\end{definition}
This definition is very similar to Definition~\ref{def:anonfull}. The main difference between the definitions is that a \mediankanonymous mechanism is only allowed to add edges to the output if, for each equivalence class of users and each item connected to them, the majority of such edges belong to the original graph. Figure~\ref{fig:depiction} we depict the difference between our \mediankanonymous and k-anonymity with suppression. 

\begin{figure}
\centering
\includegraphics[width=0.25\textwidth,keepaspectratio]{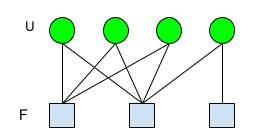}
\includegraphics[width=0.25\textwidth,keepaspectratio]{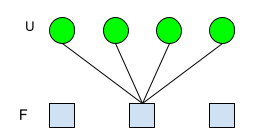}
\includegraphics[width=0.25\textwidth,keepaspectratio]{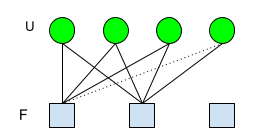}
\caption{Depiction of $k$-anonymity with suppression and \mediankanonymity for $k=4$. (left) Original input graph $G$. (center) $k$-anonymous with suppression graph. Notice the removal of the edges to the first and last feature.  (right) \mediankanonymous graph, we preserve the edges to the first feature and add a new edge to it.} 
\label{fig:depiction}
\end{figure}

We conclude this section by defining the utility measure of a mechanism. In order for a mechanism to be useful it should preserve as much as possible of the graph structure. In this paper we measure this by the Jaccard similarity of two graphs.
\begin{definition}
  Given two graphs $G = (U \cup F, E), G'  = (U \cup F, E')$ we denote the Jaccard similarity of them by
 $J(G, G') := \frac{|E \cap E'|}{|E \cup E'|}$.
  \end{definition}

\section{Comparison of privacy notions}\label{sec:dp-comparison}

Here we introduce the new algorithmic problem of finding the best smooth-$k$-anonymization of a graph. For this reason, in this section, we provide some comparison between \mediankanonymity and alternative privacy notions that can be used for data release. 

\subsection{Comparison with differential privacy}

\textbf{Node differential privacy}
As we have briefly discussed, node-differential privacy provides the best theoretical guarantees for privacy protection. In the specifics of our setup, node-differential privacy is equivalent to the so-called \emph{local} differential privacy \cite{KasiviswanathanLNRS11}, where every user is acting separately without coordination from some global authority. 
  
Let $G = (U \cup F, E)$, borrowing from local differential privacy ideas, one way of achieving node differential privacy is by releasing $\mathcal{M}(G) = (U \cup F, E')$ built according to Algorithm~\ref{alg:randomized_response}. The algorithm is parameterized by a randomized response probability $p$. It is not hard to show that in order to achieve $\epsilon$-node differential privacy $p = \frac{2}{1 + e^{\frac{\epsilon}{|F|}}}$. Notice that this value converges to $1$ exponentially fast as a function of the size of the feature set $F$. That is, even for relatively small graphs, in order to achieve any meaningful privacy guarantee, the probability of returning a completely random graph is very close to $1$. For this reason, this notion is not amenable to be used with good utility in our setting. 
 
  \begin{algorithm}[t]
    \caption{ Randomized response}
    \label{alg:randomized_response}
    \begin{algorithmic}
      \STATE {\bf Input:} $G = (U \cup F, E)$, randomized response prob. $p$
      \STATE {\bf  Output:} anonymized graph $G' = (U \cup F, E')$.
      \STATE {\bf for} $u \in U , f \in F$
      \STATE $\quad$ Sample $Y \sim \textrm{Bernoulli}(p)$
      \STATE $\quad$ {\bf if}  $Y = 1$ {\bf then} $(u, f) \in E'$ with probability $1/2$.
      \STATE $\quad$ {\bf else} {\bf if } $(u, f) \in E$ {\bf then}  $(u, f) \in E'$
      \end{algorithmic}
    \end{algorithm}
\textbf{Edge differential privacy}
Algorithm~\ref{alg:randomized_response} can also be used to define a mechanism $\mathcal{M}$ that is edge differential privacy. In that case one can achieve $\epsilon$-edge differential privacy by setting $p = \frac{2}{1 + e^\epsilon}$  (see \refappendix{app:dp}).

In this section fist we consider Algorithm~\ref{alg:randomized_response} as a natural way to provide differential privacy
and upper bound the Jaccard similarity of the input and the output graphs of this algorithm. Later in this section we show that a similar bound holds for all differential privacy algorithms.

\begin{figure}[t!]
\centering
\includegraphics[width=0.5\textwidth,keepaspectratio]{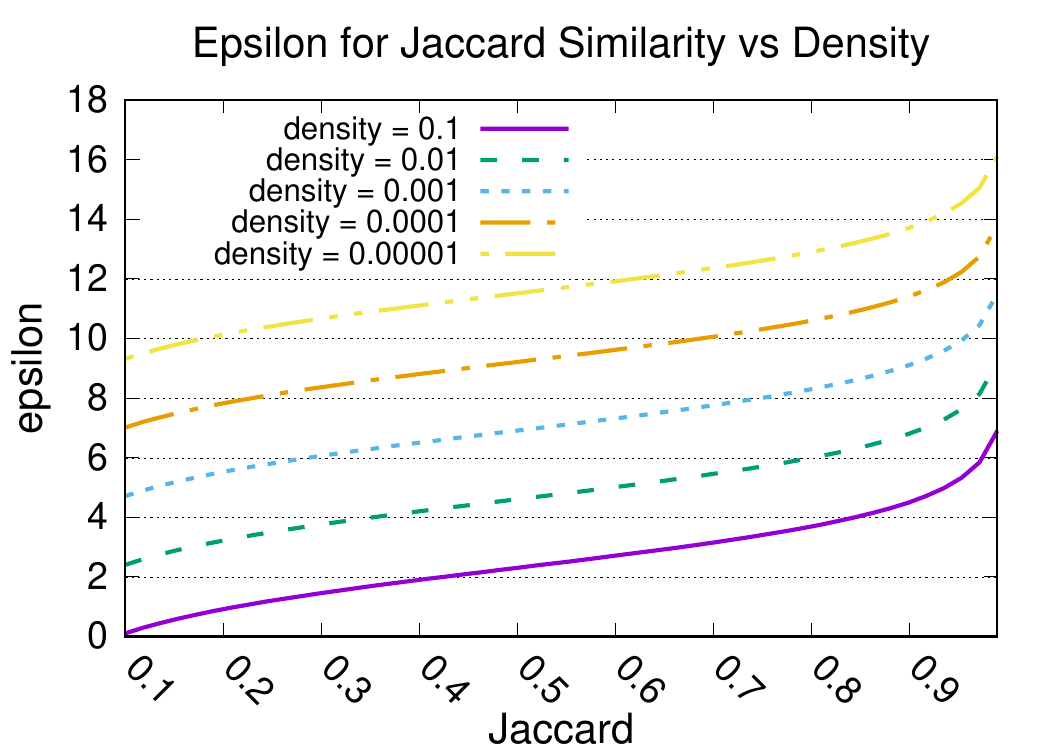}
\caption{The $\epsilon$ necessary for a given Jaccard, as a function of density \label{fig:result-eps-for-jaccard}}.
\end{figure}

We now upper bound the similarity of the output of the algorithm with $\epsilon$-edge differential privacy. The bound depends on the density $\lambda$ of the graph $G$ defined as:
$\lambda(G) = \frac{|E|}{|U||F|}.$ The proof of this theorem is available in~\refappendix{prop:upperbound-eps}.

\begin{theorem}\label{prop:upperbound-eps}
Let $G=(U \cup F, E) \in \mathbb{G}$ be a graph, $\delta > 0$. Let $\cM$ be a mechanism that generates a graph according to Algorithm~\ref{alg:randomized_response} with $p = \frac{2}{1 + e^\epsilon}$.
Let $Q = |U||F|$ and $C(\delta) = \frac{\log(2/\delta)}{2}$. If $\frac{p}{4}\geq \sqrt{\frac{C(\delta)}{Q}}$, then with probability at least $1 - \delta$:
$$J(\cM(G), G) \leq \frac{1 - \frac{1}{e^{\epsilon} + 1}}{1 + \frac{1}{e^{\epsilon} + 1}\frac{1 - \lambda}{\lambda}} + 2\sqrt{\frac{C(\delta)}{Q}},$$
where $\lambda := \lambda(G)$.
\end{theorem}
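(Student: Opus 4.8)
The statement is a concentration bound for the ratio $J(\cM(G),G)=|E\cap E'|/|E\cup E'|$, so the plan is to compute the expected numerator and denominator exactly, show that their ratio is the first term of the bound, prove each concentrates, and then assemble a high-probability bound on the ratio, using the hypothesis $\frac p4\ge\sqrt{C(\delta)/Q}$ in the last step.

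First I would record the per-pair marginals. Since Algorithm~\ref{alg:randomized_response} decides the fate of each of the $Q=|U||F|$ pairs $(u,f)$ independently, conditioning on the Bernoulli$(p)$ draw gives $\Pr[(u,f)\in E']=1-\tfrac p2$ when $(u,f)\in E$ (the edge survives unless $Y=1$ and the fresh coin is tails) and $\Pr[(u,f)\in E']=\tfrac p2$ when $(u,f)\notin E$. Hence $|E\cap E'|$ is a sum of $|E|$ i.i.d.\ Bernoulli$(1-\tfrac p2)$ variables, $|E'\setminus E|$ is a sum of $Q-|E|$ i.i.d.\ Bernoulli$(\tfrac p2)$ variables over a disjoint set of coordinates (so the two are independent), and $|E\cup E'|=|E|+|E'\setminus E|$. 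Taking expectations, $\mathbb E|E\cap E'|=|E|(1-\tfrac p2)$ and $\mathbb E|E\cup E'|=|E|+(Q-|E|)\tfrac p2$. Substituting $p=\frac{2}{1+e^{\epsilon}}$ (so $\tfrac p2=\frac{1}{e^{\epsilon}+1}$) and dividing numerator and denominator by $|E|=\lambda Q$ (using $(Q-|E|)/|E|=\frac{1-\lambda}{\lambda}$), I would verify that $\mathbb E|E\cap E'|/\mathbb E|E\cup E'|$ equals exactly the first term of the theorem.

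Next I would apply Hoeffding's inequality, one-sided, to the upper tail of $|E\cap E'|$ and to the lower tail of $|E\cup E'|$ (equivalently of $|E'\setminus E|$), bounding the number of summands in each by $Q$ and setting each failure probability to $\delta/2$ — which is precisely where $C(\delta)=\frac{\log(2/\delta)}{2}$ enters. A union bound then gives that, with probability at least $1-\delta$, both $|E\cap E'|\le\mathbb E|E\cap E'|+\sqrt{C(\delta)Q}$ and $|E\cup E'|\ge\mathbb E|E\cup E'|-\sqrt{C(\delta)Q}$ hold. On that event, $J(\cM(G),G)\le\frac{\mathbb E|E\cap E'|+\sqrt{C(\delta)Q}}{\mathbb E|E\cup E'|-\sqrt{C(\delta)Q}}$; writing $\eta:=\sqrt{C(\delta)/Q}$ and dividing through by $Q$, the hypothesis $\tfrac p4\ge\eta$ forces $\mathbb E|E\cup E'|/Q=\lambda+(1-\lambda)\tfrac p2\ge\tfrac p2\ge 2\eta$, so the perturbed denominator stays at least half its mean, and I would finish with a short algebraic comparison of $\frac{x+\eta}{y-\eta}$ against $\frac xy$ to absorb the slack into the additive $2\sqrt{C(\delta)/Q}$ term.

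The hard part will be this last step: turning the ratio of two independently perturbed quantities into a \emph{clean} additive error of exactly $2\sqrt{C(\delta)/Q}$ on top of the main term. This is where the parameter hypothesis is essential (to keep the denominator away from $0$), and it may require using the sharper Hoeffding deviations $\sqrt{|E|\log(\cdot)}$ and $\sqrt{(Q-|E|)\log(\cdot)}$ rather than $\sqrt{Q\log(\cdot)}$, together with care about how the error is apportioned between numerator and denominator so the bound does not degrade when $\lambda$ is small. Everything before that — the marginal computation, the identification of the main term, and the two tail bounds — is routine.
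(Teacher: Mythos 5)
Your plan matches the paper's proof essentially step for step: the same per-pair marginals ($1-\tfrac p2$ for surviving edges, $\tfrac p2$ for spurious ones), the same one-sided Hoeffding bounds on $|E\cap E'|$ and $|E\cup E'|$ with failure probability $\delta/2$ each, and the same final perturbed-ratio argument, which the paper carries out via its Lemma~\ref{lem:ratio} with the hypothesis $\tfrac p4\ge\sqrt{C(\delta)/Q}$ entering exactly as you predict (for the relevant $a,b$ one has $b-a=\tfrac p2$, so the hypothesis is precisely $x\le\frac{b-a}{2}$). The step you flag as "the hard part" is indeed the only delicate point, and the paper resolves it with the same crude $\sqrt{C(\delta)/Q}$ deviations you propose rather than the sharper $\sqrt{\lambda}$-weighted ones.
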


Using Theorem~\ref{prop:upperbound-eps}, we can plot in Figure~\ref{fig:result-eps-for-jaccard} a lower bound for the $\epsilon$ needed to achieve a certain level of Jaccard similarity utility, given a density factor $\lambda$. Notice that for reasonably sparse datasets, say graphs with density around $1/10{,}000$, one needs an $\epsilon$ higher than $10$ to obtain a Jaccard Similarity of more than $50\%$. This result will be confirmed in Section~\ref{sec:exp} in our empirical analysis.

The previous Theorem allows us to derive the following.
\begin{corollary}
\label{cor:high_eps}
If the average degree is $O(m^{0.99})$, for any $\epsilon \in o(\log m)$, $\epsilon$-DP gives a solution with Jaccard similarity of $o(1)$.
\end{corollary}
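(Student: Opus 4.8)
The plan is a direct application of Theorem~\ref{prop:upperbound-eps} in the stated asymptotic regime, checking that both the leading term and the concentration error in that bound are $o(1)$. First I would translate the hypothesis on the degree into a bound on the density: if the average degree (number of features incident to a user) is $O(m^{0.99})$ then $|E| = O(n\,m^{0.99})$, so the density satisfies $\lambda := \lambda(G) = |E|/(nm) = O(m^{-0.01})$, which in particular tends to $0$ as $m\to\infty$.

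Next I would control $e^{\epsilon}$. Writing $\epsilon = g(m)\log m$ with $g(m)\to 0$ gives $e^{\epsilon} = m^{g(m)} = m^{o(1)}$, i.e.\ $e^{\epsilon}$ is subpolynomial in $m$. Set $a := \tfrac{1}{e^{\epsilon}+1}$, so that for all large $m$ we have $a \ge \tfrac{1}{2}\,m^{-o(1)}$. Plugging into the leading term of Theorem~\ref{prop:upperbound-eps} and using $1-\lambda \ge \tfrac12$,
\begin{equation*}
\frac{1-a}{\,1 + a\frac{1-\lambda}{\lambda}\,} \;\le\; \frac{\lambda}{a(1-\lambda)} \;=\; O\!\left(\frac{m^{-0.01}}{m^{-o(1)}}\right) \;=\; O\!\left(m^{-0.01+o(1)}\right) \;=\; o(1),
\end{equation*}
since the exponent $-0.01+o(1)$ is eventually negative: any subpolynomial quantity is eventually dominated by $m^{0.01}$. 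This is the only step that needs any real care, and the place where the two hypotheses interact — I expect this to be the ``main obstacle'' in the sense that everything hinges on the fixed polynomial sparsity exponent $0.01$ beating the $o(1)$ coming from $\epsilon = o(\log m)$; all the rest is bookkeeping.

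Finally I would dispose of the additive error and verify the side condition of the theorem. Taking $\delta$ to be a fixed small constant makes $C(\delta)$ a constant, and since $Q = nm \ge m \to \infty$ we get $2\sqrt{C(\delta)/Q} = O(m^{-1/2}) = o(1)$. The hypothesis $\tfrac p4 \ge \sqrt{C(\delta)/Q}$ likewise holds for all large $m$: here $p = \tfrac{2}{1+e^{\epsilon}} = \Theta(m^{-o(1)})$ decays only subpolynomially, while $\sqrt{C(\delta)/Q} = O(m^{-1/2})$ decays polynomially. Combining the two estimates, for all sufficiently large $m$ the randomized-response mechanism of Algorithm~\ref{alg:randomized_response} with $p=\tfrac{2}{1+e^{\epsilon}}$ (which is $\epsilon$-edge differentially private) satisfies $J(\cM(G),G) = o(1)$; and by the analogous lower bound for arbitrary differentially private mechanisms established in this section, the same conclusion holds for every $\epsilon$-DP mechanism, which is the statement of the corollary. $\qed$
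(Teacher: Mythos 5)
Your derivation is correct and is exactly the route the paper intends: the corollary is stated as an immediate consequence of Theorem~\ref{prop:upperbound-eps}, and your bookkeeping — translating average degree $O(m^{0.99})$ into $\lambda = O(m^{-0.01})$, noting $e^{\epsilon} = m^{o(1)}$ so the leading term is $O(m^{-0.01+o(1)}) = o(1)$, and checking the additive error and the side condition $\tfrac{p}{4}\ge\sqrt{C(\delta)/Q}$ for large $m$ — fills in precisely the omitted details. The closing remark about extending to arbitrary $\epsilon$-DP mechanisms via Theorem~\ref{thm:hardness:general} is consistent with the paper's narrative but is not part of this corollary, which concerns the randomized-response mechanism only.
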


This means that, when the average user-degree is poly-logarithmic (or even $m^{0.99}$) we need $\epsilon \in \Omega (\log m)$ to achieve a constant Jaccard similarity with $\epsilon$-edge differential privacy. It is well-known that many real-world  datasets are sparse. For instance in the context of real-world networks, classical theoretical models~\cite{albert2002statistical} as well as empirical studies~\cite{backstrom2012four,leskovec2007graph} postulate constant average degree or degrees growing slower than the size of the graph.    

The above results show that using randomized response, any differentially private approximation to a graph with high utility requires an exceptionally large value of $\epsilon$, thus rendering void any privacy guarantees. Later in Theorem~\ref{thm:hardness:general} we show a similar bound on $\epsilon$ for all differentially private algorithms. In fact, for such a high $\epsilon$, the algorithm likely maintains the graph $G$ unmodified thus exposing users to re-identification risks. For this reason we believe $k$-anonymity might in fact provide better protection in practice, especially in scenarios where our goal is to produce a private version of the input graph with high utility.

Unfortunately, however, all of the previous work~\cite{aggarwal2005approximation,kenig2012practical} provide non-trivial guarantees on the quality of a $k$-anonymous mechanism only when  $J(E,E_{\opt}) \geq 1-O(\frac 1 {\log k})$, in other words when very few edges need to be removed, as the similarity between the graph and the optimal $k$-anonymous graph is very high.  By contrast, we show in Section~\ref{sec:technical} one can achieve a constant approximation to the optimal \mediankanonymous solution in less restrictive scenarios. 

\textbf{Hardness of Differential Privacy}\label{sec:hardness:general}
So far in this section we analyzed randomized response and show that this mechanism requires an unacceptably large $\epsilon$. One may ask if there is any other $\epsilon$-differential privacy mechanism with a small $\epsilon$ that guarantees the output to be similar to the input. The following theorem rules out the existence of such a mechanism. 
 
\begin{theorem}\label{thm:hardness:general}
Let $\mathcal{M}$ be an arbitrary mechanism that satisfies $\epsilon$-edge differential privacy. Let $\alpha$ be a parameter such that for any input graph $G = (U \cup F, E)$, we have $\alpha \leq E[J(\mathcal{M}(G),G)]$. We have
$\epsilon \in \Omega\big(\log(\alpha^2 nm)\big).$
\end{theorem}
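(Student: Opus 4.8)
The plan is to exhibit a family of graphs on which $\epsilon$-edge differential privacy forces the output to be far (in Jaccard similarity) from at least one member of the family, and then to turn a small $\epsilon$ into a contradiction with the assumed lower bound $\alpha$ on the expected similarity. First I would construct a "packing" of inputs: consider the graph $G_0$ with a single edge $e_0 = (u,f)$ and, more generally, for any set $S$ of edges the graph $G_S$. The key observation is that since the total number of possible edges is $nm$, a graph that is Jaccard-similar to $G_S$ with similarity at least $\alpha$ and $|S|$ small must share at least $\alpha |S|$ of its edges with $S$; taking $|S|$ to be small (say a single edge, or a set of $O(1/\alpha)$ edges chosen so the construction has the right scale $\alpha^2 nm$), being close to $G_S$ is a rare event among all graphs, so the output distributions on different $G_S$ cannot all concentrate near their respective inputs unless $e^{\epsilon}$ is large.

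Concretely, the core of the argument is a group-privacy step. Fix a target sparse graph $G$ with $t$ edges. The empty graph $G_\emptyset$ is at edge-distance $t$ from $G$, so by $\epsilon$-edge differential privacy applied $t$ times, for every event $A$ we have $P(\cM(G_\emptyset)\in A)\le e^{\epsilon t} P(\cM(G)\in A)$ and vice versa. Now choose $A$ to be the event "the output has Jaccard similarity at least $\alpha/2$ with $G$." By the utility assumption and Markov's inequality applied to $1-J(\cM(G),G)$, we get $P(\cM(G)\in A)\ge$ some constant (after choosing the threshold appropriately, e.g.\ $\alpha/2$ gives $P\ge \alpha/2$ or better via a reverse-Markov argument since $J\le 1$). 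On the other hand, I want to upper bound $P(\cM(G_\emptyset)\in A)$ by a union bound over which graphs are $\alpha/2$-close to $G$: a graph $G'$ with $J(G,G')\ge\alpha/2$ must have $|E\cap E'|\ge \frac{\alpha}{2}|E\cup E'|\ge\frac{\alpha}{2}|E'|$ and $\ge\frac{\alpha}{2}\cdot t$... the cleanest route is to note $|E'|\le \frac{2}{\alpha}t$ and $|E\cap E'|\ge \frac{\alpha}{2}t$, which still leaves too many choices, so instead I would let $\cM(G_\emptyset)$'s marginal on each potential edge be what it is and bound the probability that the output overlaps the $t$ fixed edges of $G$ in at least $\frac{\alpha}{2}t$ of them — but this requires controlling $\cM(G_\emptyset)$, which is arbitrary. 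The fix: average over a large packing $\{G_1,\dots,G_N\}$ of such target graphs with pairwise edge-intersections small (a combinatorial design / probabilistic construction gives $N$ roughly $\exp(\Theta(t\log(nm/t)))$ disjoint-ish $t$-edge graphs), so that the events $A_i = \{J(\cM(\cdot),G_i)\ge\alpha/2\}$ are nearly disjoint; then $\sum_i P(\cM(G_\emptyset)\in A_i)\le O(1)$, so some $i$ has $P(\cM(G_\emptyset)\in A_i)\le O(1/N)$, while $P(\cM(G_i)\in A_i)\ge \Omega(\alpha)$. Combining with the group-privacy inequality yields $e^{\epsilon t}\ge \Omega(\alpha N)$, hence $\epsilon \ge \frac{1}{t}\log(\Omega(\alpha N)) = \Omega\big(\log(nm/t)\big)$ after plugging in $N$. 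Optimizing the free parameter $t$ against the $1/t$ prefactor, and using that the utility bound must hold for the specific scale where $t \approx 1/\alpha$ (so that $\alpha t \approx 1$ edges must be preserved), gives the stated $\epsilon \in \Omega(\log(\alpha^2 nm))$.

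I would organize the write-up as: (1) state the group-privacy consequence of $\epsilon$-edge DP over edit distance $t$; (2) construct the packing of $N$ sparse target graphs with controlled overlaps and quantify $N$; (3) lower bound $P(\cM(G_i)\in A_i)$ from the utility hypothesis via a reverse Markov inequality; (4) upper bound $\min_i P(\cM(G_\emptyset)\in A_i)$ by $O(1/N)$ using near-disjointness of the $A_i$; (5) chain the inequalities and optimize over $t$.

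The main obstacle I anticipate is step (4) together with the choice of $t$: making the events $A_i=\{J(\cM(\cdot),G_i)\ge\alpha/2\}$ genuinely (almost) disjoint requires the target graphs $G_i$ to be far apart in a Jaccard sense, and a graph can be $\alpha/2$-close to two different targets if $\alpha$ is small, so one needs the packing to be at "Jaccard radius" comfortably larger than $\alpha$ — this constrains how large $N$ can be and is exactly where the $\alpha^2$ (rather than $\alpha$) enters. Getting the exponents to line up so that $\frac{1}{t}\log N = \Omega(\log(\alpha^2 nm))$ is the delicate accounting; everything else is a fairly standard DP packing/lower-bound argument.
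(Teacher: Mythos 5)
Your high-level strategy---force $\cM$'s output on a target graph to contain that target's edges with noticeable probability, transfer this through edge-DP to a nearby reference graph, and contradict the fact that a high-Jaccard output must be sparse---is sound, and it does yield the stated bound. But the instantiation that works is the degenerate $t=1$ case you mention at the outset and then abandon, and it needs none of the machinery you build afterwards. Take the reference input to be the empty graph $G_\emptyset$ and the targets to be the $nm$ single-edge graphs $G_e$. Your reverse-Markov step gives $P\bigl(J(\cM(G_e),G_e)\ge \alpha/2\bigr)\ge \alpha/2$, and for a one-edge target the event $\{J(H,G_e)\ge\alpha/2\}$ is exactly $A_e=\{e\in H \text{ and } |H|\le 2/\alpha\}$. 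A single application of edge-DP (no group privacy) gives $P(\cM(G_\emptyset)\in A_e)\ge \frac{\alpha}{2}e^{-\epsilon}$ for every $e$, while $\sum_e P(\cM(G_\emptyset)\in A_e)=E\bigl[|\cM(G_\emptyset)|\cdot\mathbf{1}(|\cM(G_\emptyset)|\le 2/\alpha)\bigr]\le 2/\alpha$; hence $nm\cdot\frac{\alpha}{2}e^{-\epsilon}\le 2/\alpha$, i.e.\ $e^{\epsilon}\ge \alpha^2 nm/4$. The two factors of $\alpha$ come from the reverse-Markov probability and from the output-size cap, not from a Jaccard packing radius.

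The general-$t$ packing scaffolding is where the proposal has genuine problems. You cannot have $N=\exp(\Theta(t\log(nm/t)))$ targets with the separation step (4) requires: even pairwise \emph{disjoint} $t$-edge targets (of which there are only $nm/t$) do not make the events $A_i$ near-disjoint, since a single output with $2t/\alpha$ edges can meet $\alpha t/2$ edges of up to $4/\alpha^2$ disjoint targets. With that overcounting the chained inequality becomes $e^{\epsilon t}\ge\Omega(\alpha^3 N)$ with $N\le nm/t$, so the bound \emph{degrades} as $t$ grows, and the proposed choice $t\approx 1/\alpha$ gives something strictly weaker than $\Omega(\log(\alpha^2 nm))$. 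So steps (2), (4) and (5) as written would not go through---but they are also unnecessary. For comparison, the paper's proof is the same counting argument in different clothing, also using only single-edge DP: it post-processes $\cM$ into $\overline{\cM}$ that outputs the empty graph whenever $|\cM(G)|>2(l+1)/\alpha$ (the analogue of the size cap inside $A_e$), takes a uniformly random $l$-edge graph $G$ with $l=\lfloor(nm)^{0.9}\rfloor$ coupled with $G'=G\cup\{(u,f)\}$ for a random non-edge $(u,f)$, argues by averaging (rather than reverse Markov) that $(u,f)\in\overline{\cM}(G')$ with probability at least $\alpha/2$, transfers this to $G$ at cost $e^{-\epsilon}$, and concludes $E[|\overline{\cM}(G)|]\ge\frac{\alpha}{2}e^{-\epsilon}\,nm$, contradicting the cap. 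Your empty-graph version, once reduced to $t=1$, is arguably the cleaner of the two.
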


\begin{proof}
To prove this first we define a policy $\overline{\mathcal{M}}(G)$ based on $\mathcal{M}(G)$, such that $\overline{\mathcal{M}}(G)$ is 
\begin{itemize}
    \item an $\epsilon$-differentially private mechanism,
    \item $E[J(G,\overline{\mathcal{M}}(G))]\geq \frac{\alpha}{2}$, when $|G|\geq l=\lfloor(nm)^{0.9}\rfloor$, and
    \item $|\overline{\mathcal{M}}(G)|\leq\frac{2(l+1)}{\alpha}$.
\end{itemize}
The third property bounds the range of $|\overline{\mathcal{M}}(G)|$ and allows us to analyze $\overline{\mathcal{M}}(G)$ and bound $\epsilon$.
We define policy $\overline{\mathcal{M}}(G)$ based on $\mathcal{M}(G)$ as follows:
\begin{itemize}
    \item If $|\mathcal{M}(G)|>\frac{2(l+1)}{\alpha}$ then $\overline{\mathcal{M}}(G)$ is set to empty graph,
    \item  otherwise $\overline{\mathcal{M}}(G)=\mathcal{M}(G)$.
\end{itemize}
Note that $\overline{\mathcal{M}}(G)$ can be exactly calculated given $\mathcal{M}(G)$, hence $\overline{\mathcal{M}}(G)$ is an $\epsilon$-differentially private policy as well. 
Moreover, note that when $|\mathcal{M}(G)|>\frac{2(l+1)}{\alpha}$ we have 
\begin{align} \label{eq:hardness:eq1}
J(G,\mathcal{M}(G))=\frac{|G\cap \mathcal{M}(G)|}{|G\cup \mathcal{M}(G)|}\leq \frac{|G\cap \mathcal{M}(G)|}{|\mathcal{M}(G)|} < \frac{\alpha}{2}\frac{|G|}{l+1}.
\end{align}
Hence, for a graph $G$ with $|G|\leq l+1$ we have
\begin{align*}
    E[J(G,\overline{\mathcal{M}}(G))] =
    &E[J(G,{\mathcal{M}}(G))] - E[J(G,{\mathcal{M}}(G)) - J(G,\overline{\mathcal{M}}(G))]\geq\\
    &\alpha - E[J(G,{\mathcal{M}}(G)) - J(G,\overline{\mathcal{M}}(G))]\geq & \text{\it (By def. )}\\
    & \alpha - E[\max(0, \frac{\alpha}{2}\frac{|G|}{l+1})]\geq &\text{ \it (By Ineq.~\ref{eq:hardness:eq1})}\\
    & \frac{\alpha}{2}. & \text{\it (By $|G|\leq l+1$)}
\end{align*}

Consider the following two equivalent random processes to construct random graphs $G = (U \cup F, E)$ and $G' = (U \cup F, E')$.

\begin{itemize}
    \item Select $l$ pairs of nodes from $U\times F$ uniformly at random without replacement. Add an edge between each selected pair in both $D$ and $D’$. Select one other pair of nodes from $U\times F$ uniformly at random without replacement, denote it as $(u,f)$, and add an edge between $u$ and $f$ in $D'$.
    
    \item Select $l+1$ pairs of nodes from $U\times F$ uniformly at random without replacement. Add an edge between each selected pair in both $D$ and $D’$. Select one of the edges in $D$ uniformly at random, denote it as $(u,f)$, and remove it from $D$.
\end{itemize}
Note that, $G$ is a graph chosen uniformly at random from all graphs on $U\times F$ with $l$ edges, and $G'$ is a graph chosen uniformly at random from all graphs on $U\times F$ with $l+1$ edges. Moreover, $(u,f)$ is both an edge selected uniformly at random from the edges inside $G'$ and it is an edge selected uniformly at random from the edges that are not in $G$.

Recall that, by definition, we have 
\begin{align*}
\frac{\alpha}{2} &\leq E\big[J(\overline{\mathcal{M}}(G'),G')\big]= E\Big[\frac{|\overline{\mathcal{M}}(G')\cap G'|}{|\overline{\mathcal{M}}(G')\cup G'|}\Big] 
\leq  E\Big[\frac{|\overline{\mathcal{M}}(G')\cap G'|}{| G'|}\Big] =\frac{E\big[|\overline{\mathcal{M}}(G')\cap G'|\big]}{| G'|}.
\end{align*}
Note that, if we select one of the edges of $G'$ uniformly at random, it exists in $\overline{\mathcal{M}}(G')$ with probability at least $\frac{E\big[|\overline{\mathcal{M}}(G')\cap G'|\big]}{| G'|} \geq \frac{\alpha}{2}$. Hence, we have $(u,f)\in \overline{\mathcal{M}}(G')$ with probability at least $\frac{\alpha}{2}$. Let $S$ be the set of all possible outputs of $\overline{\mathcal{M}}(G')$ where the $(u,f)\in \overline{\mathcal{M}}(G')$. By the definition of differential privacy we have
\begin{align*}
    \textbf{Pr}\Big(\overline{\mathcal{M}}(G')\in S\Big) \leq e^{\epsilon} \textbf{Pr}\Big(\overline{\mathcal{M}}(G)\in S\Big),
\end{align*}
Which means
\begin{align*}
\textbf{Pr}\Big(\overline{\mathcal{M}}(G)\in S\Big) \geq 
    e^{-\epsilon}\textbf{Pr}\Big(\overline{\mathcal{M}}(G')\in S\Big)   \geq \frac{\alpha e^{-\epsilon}}{2}.
\end{align*}
This means that $(u,f)\in \overline{\mathcal{M}}(G)$ with probability at least $\frac{\alpha e^{-\epsilon}}{2}$.
Recall that, by definition $(u,f)$ is an edge chosen uniformly at random from the edges that do not exist in $G$. Hence, if we select one of the edges that do not exist in $G$, it exists in $\overline{\mathcal{M}}(G)$ with probability at least $\frac{\alpha e^{-\epsilon}}{2}$.

On the other hand, similar to $G'$, if we select one of the edges of $G$ uniformly at random, it exists in $\overline{\mathcal{M}}(G)$ with probability at least $\frac{\alpha}{2}$. Hence, we have $$E[|\overline{\mathcal{M}}(G)|] \geq (nm-l)\frac{\alpha}{2} e^{-\epsilon} + l\frac{\alpha}{2} \geq \frac{nm\alpha e^{-\epsilon}}{2}.  $$ 
Recall that by construction we have $|\overline{\mathcal{M}}(G)|\leq\frac{2(l+1)}{\alpha}$. This together with the above inequality gives us $\frac{nm\alpha e^{-\epsilon}}{2}\leq\frac{2(l+1)}{\alpha}$. This implies $\epsilon\geq \log\frac{\alpha^2nm }{4(l+1)} \in \Omega\big(\log(\alpha^2 nm)\big) $, as claimed.
\end{proof}

\subsection{Comparison with $k$-anonymity by suppression}
In this section we compare $k$-anonymity by suppression with \mediankanonymity. 
First, in terms of privacy, we notice that both \mediankanonymity and $k$-anonymity by suppression guarantee that every user in the output is indistinguishable from at least $k$-users. Moreover, observe that since \mediankanonymity is allowed to add edges to the output graph, an attacker would not be certain whether an edge was in the original graph or not. 

We now show formally that the optimum solution of \mediankanonymity may preserve a significantly larger fraction of the input data than regular $k$-anonymity. To show this separation rigorously, we adopt the bipartite stochastic block model (SBM), which is commonly used in modeling applications, for instance in clustering and community detection~\cite{abbe2017community}. 

\textbf{Bipartite Stochastic Block Model.} 
\label{sec:stochastic}
To define the bipartite SBM, consider the following random process. We have two sets of $n$ vertices, and each set is further decomposed into $r$ blocks of size $s$, where $r\cdot s = n$. Each block in the first part corresponds to one block of the second part. There is an edge between each pair of vertices in two corresponding blocks independently with probability $q$, and between every other pair of vertices with probability $p$. We let $\alpha = q s$ denote the expected number of edges that one node has to its corresponding block, a.k.a. \emph{internal edges}. 
We let $\beta = p (n-s)$ denote the expected number of edges that one node has to vertices other than its corresponding block, a.k.a. \emph{external edges}. We refer to this as the stochastic block model with parameters $r,s,\alpha,\beta$. 

The result that we provide in this section is of particular interest when the blocks are not very sparse, i.e., $\alpha \in \omega(\frac {\log n}{\log 1/q})$ and $\alpha \in \Omega(\beta + s)$.

The next theorem upper bounds the number of edges in a (non-smooth) $k$-anonymous subgraph of a graph generated by the stochastic block model by $O(n \frac {\log n}{\log 1/q})$. Therefore, since $\alpha \in \omega(\frac {\log n}{\log 1/q})$, the fraction of remaining edges tends to zero. The proof of this theorem is presented in \refappendix{app:thm:sbm}.
\begin{theorem}\label{thm:sbm}
Let $G$ be a graph generated by the stochastic block model with parameters $r,s,\alpha,\beta$. Let $k \geq \frac{2\log n}{\log  1 / q} $. With probability $99\%$, any $k$-anonymous subgraph of $G$ contains at most $\frac{2\log n+10}{\log  1 / q} n  \in \tilde{O}(n)$ edges. 
\end{theorem}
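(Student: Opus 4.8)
The plan is to show that, with probability at least $99\%$, \emph{no} set of $T+1$ features is simultaneously contained in the $E$-neighbourhoods of $k'$ users, where $T=\frac{2\log n+10}{\log(1/q)}$ and $k'=\big\lceil\frac{2\log n}{\log(1/q)}\big\rceil$; granting this, any $k$-anonymous subgraph (with $k\geq k'$) can retain at most $T$ edges at each of the $n$ users, so it has at most $Tn$ edges in total, which is the claimed bound. Throughout we use $q<1$ and $p\leq q$, the latter being the natural ``community'' regime and in fact forced in the parameter range of interest, where $\alpha=qs=\Omega(\beta+s)=\Omega(p(n-s))$.

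\textbf{Reduction to a single bad event.} Let $G'=(U\cup F,E')$ with $E'\subseteq E$ be any $k$-anonymous subgraph, and fix $u\in U$. Every $u'\in C_u(E')$ satisfies $F_{u'}(E')=F_u(E')$, and since $E'\subseteq E$ this gives $F_u(E')\subseteq F_{u'}(E)$; as $|C_u(E')|\geq k\geq k'$, the set $F_u(E')$ is contained in the $E$-neighbourhood of at least $k'$ distinct users. Since this ``popularity'' property is inherited by all subsets, if there is \emph{no} $W\subseteq F$ with $|W|=T+1$ contained in at least $k'$ of the sets $\{F_u(E)\}_{u\in U}$, then $|F_u(E')|\leq T$ for every $u$, whence $|E'|=\sum_{u\in U}|F_u(E')|\leq Tn$. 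It thus suffices to bound the probability of the bad event $B=\{\exists\,W\subseteq F,\ |W|=T+1,\ |\{u\in U:W\subseteq F_u(E)\}|\geq k'\}$.

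\textbf{Union bound.} We union bound over the $\binom{n}{T+1}$ choices of $W$ and, for each, over the $\binom{n}{k'}$ choices of a witnessing set of $k'$ users. For a fixed $W$ of size $T+1$ and a fixed set of $k'$ users, the event that all $k'(T+1)$ corresponding edges of $G$ are present has probability at most $q^{k'(T+1)}$, because edges are independent and each is present with probability $q$ or $p\leq q$. Hence $\Pr[B]\leq\binom{n}{T+1}\binom{n}{k'}q^{k'(T+1)}\leq n^{\,T+1+k'}\,q^{k'(T+1)}$. Since $k'\geq\frac{2\log n}{\log(1/q)}$ we have $q^{k'}\leq n^{-2}$, so $q^{k'(T+1)}=(q^{k'})^{T+1}\leq n^{-2(T+1)}$ and therefore $\Pr[B]\leq n^{\,k'-(T+1)}$.

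\textbf{Closing the bound, and the main obstacle.} It remains to verify $n^{\,k'-(T+1)}\leq\frac{1}{100}$, i.e.\ $T+1-k'\geq\log_n 100$. Writing $k_0=\frac{2\log n}{\log(1/q)}$, we have $k'=\lceil k_0\rceil<k_0+1$ and $T=k_0+\frac{10}{\log(1/q)}$, so $T+1-k'>\frac{10}{\log(1/q)}$, which is at least $\log_n 100=\frac{\log 100}{\log n}$ in the non-degenerate range (certainly whenever $q\geq n^{-1}$, in particular for $q=\Omega(1)$ as in the interesting regime; for much smaller $q$ the hypothesis $k\geq k'$ is essentially vacuous and $G$ itself already has $o(n)$ edges). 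This final inequality is the crux: the union bound spends a factor $n^{\,T+1+k'}$ on the two nested choices while each surviving term contributes only $q^{k'(T+1)}\leq n^{-2(T+1)}$, so the whole argument hinges on the slack $(T+1)-k'$ being (barely) positive --- which is exactly what the additive ``$+10$'' in the statement buys. Two smaller points also need care: the bad event must be phrased with $k'$ rather than the given $k$ (otherwise a large $k$ would inflate the factor $\binom{n}{k'}$ without any compensating gain), and the per-edge bound $q^{k'(T+1)}$ relies on $p\leq q$.
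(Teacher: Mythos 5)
Your proof takes essentially the same route as the paper's: both reduce the claim to showing that, with probability $99\%$, $G$ contains no complete bipartite subgraph between (roughly) $k$ users and $t=\frac{2\log n+10}{\log(1/q)}$ features, and both establish this by a union bound over the $\binom{n}{k}\binom{n}{t}$ candidate pairs with per-term probability $q^{kt}$, using $p\le q$. The one substantive difference is how the exponent is closed, and here your accounting is lossier. The paper splits $q^{kt}=q^{kt/2}\cdot q^{kt/2}$, uses one half against $n^{t}$ (the hypothesis $k\ge\frac{2\log n}{\log(1/q)}$ gives $n^{t}q^{kt/2}\le 1$) and the other half against $n^{k}$ (the ``$+10$'' gives $\frac{t}{2}\log\frac1q=\log n+5$, hence $n^{k}q^{kt/2}=e^{-5k}\le e^{-5}<0.01$), which closes the bound with no further assumptions on $q$. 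You instead spend all of $q^{k'(T+1)}\le n^{-2(T+1)}$ at once and are left needing $n^{k'-(T+1)}\le\frac{1}{100}$, which forces the side condition $\frac{10\log n}{\log(1/q)}\ge\log 100$, i.e.\ roughly $q\ge n^{-10/\ln 100}$ --- a hypothesis the theorem does not make. Your proposed repair for smaller $q$ (that $G$ is then empty w.h.p., by Markov on $E[|E|]\le n^2q$) is the right idea but is left informal and itself needs $n$ large. Adopting the paper's exponent split removes this case analysis entirely and yields the stronger failure probability $e^{-5k}$. One small aside: your claim that running the union bound with the given $k$ rather than $k'=\lceil\frac{2\log n}{\log(1/q)}\rceil$ would ``inflate $\binom{n}{k}$ without any compensating gain'' is not right --- since $nq^{T+1}<1$, the term $n^{k}q^{k(T+1)}$ actually decreases in $k$, which is exactly why the paper can work with $k$ directly; using $k'$ is also valid by monotonicity, so this does not affect your proof's correctness.
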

This allows us to show a gap with \mediankanonymization. In fact, a natural solution that puts the vertices of each block in a cluster leads to a solution for \mediankanonymity that in expectation keeps $\alpha n$ edges, adds $(s-\alpha)n$ edges, and removes $\beta n$ edges. Since $\alpha \in \Omega(\beta + s)$, the number of remaining edges $\alpha n$ is not less than a constant factor of the changed edges. This result concerns the optimum solution, but in the next section we provide an algorithm for computing \mediankanonymization of a graph.
\section{Algorithms and Analysis}
\label{sec:technical}
In this section we develop algorithms that find a \mediankanonymization of $G$.
We say an algorithm $alg$ is $\alpha$-approximation if $J(E,E_{alg})/J(E,E_{\opt}) \geq \alpha$, where $E_{alg}$ is the output of $alg$, $E_{\opt}$ is the optimal solution, and $J(\cdot,\cdot)$ is the Jaccard similarity function.
%
%
Our main contribution is captured by the following theorem. 

\begin{theorem}
Assume $J(E,E_{\opt}) \geq 0.75$. There exists an algorithm that finds a constant approximate \mediankanonymization of $G$ in polynomial time.
\end{theorem}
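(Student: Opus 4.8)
The plan is to reduce \mediankanonymization to a lower bounded clustering / facility location problem, in the spirit of the related work on lower bounded facility location cited in the introduction, and then show that a constant approximation for that problem yields a constant approximation for our objective whenever the optimal Jaccard similarity is bounded away from the degenerate regime (here $\geq 0.75$). First I would observe that any \mediankanonymous output is determined by a partition of the users into equivalence classes (clusters), each of size at least $k$, together with a choice of feature set for each cluster; and by the majority condition, for a fixed clustering the optimal assignment of features is forced — a cluster $C$ should contain edge $(u,f)$ for all $u \in C$ exactly when at least half the users in $C$ originally had $f$, i.e. $f$ is a "majority feature" of $C$. Thus the whole problem is really: choose a partition of $U$ into parts of size $\geq k$ to maximize (or, in a cost-minimization reformulation, minimize the symmetric difference induced by) the resulting edge set. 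This is exactly the structure of a lower-bounded clustering problem, where the "facility" is a representative feature-profile and the connection cost of a user to a cluster measures how many of its edges must be added or removed.

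The key steps, in order, would be: (i) prove the "forced majority assignment" lemma above, so that the objective depends only on the clustering; (ii) rewrite the number of changed edges $|E \oplus E_{alg}|$ for a clustering as a sum over clusters of a local cost, and show this local cost is within a constant factor of a metric-like facility-location cost (the triangle inequality up to constants on Hamming distances between rows, combined with the majority/median being a $2$-approximate $1$-median in Hamming distance — this is the classical fact that the coordinate-wise majority minimizes total Hamming distance, in fact exactly, which is even cleaner); (iii) invoke an existing constant-factor (bicriteria-free) approximation for lower-bounded facility location / $k$-median with lower bounds, e.g. the $82.6$-approximation of Ahmadian–Swamy or the lower-bounded variants, to get a clustering whose changed-edge count is within a constant of optimal; (iv) translate a constant-factor guarantee on $|E \oplus E_{alg}|$ into a constant-factor guarantee on the Jaccard similarity $J = |E \cap E'| / |E \cup E'|$, using the identity $J = (|E| - \text{(removed)})/(|E| + \text{(added)})$ and the hypothesis $J(E, E_{\opt}) \geq 0.75$, which forces the optimal number of changed edges to be at most a constant fraction of $|E|$ and hence keeps all quantities comparable to $|E|$.

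The main obstacle I expect is step (iv), the passage from an additive/multiplicative guarantee on the symmetric difference to a multiplicative guarantee on Jaccard similarity: Jaccard is not linear, and a $c$-approximation on changed edges does not automatically give a $\Theta(1)$-approximation on $J$ unless one controls the scale of everything relative to $|E|$. The assumption $J(E,E_{\opt}) \geq 0.75$ is precisely what rescues this — it implies $\text{removed}_{\opt} + \text{added}_{\opt} \leq \tfrac{1}{3}|E|$ roughly, so $|E_{\opt}| = \Theta(|E|)$, and then a constant-factor blow-up in changed edges still leaves $|E \cap E_{alg}| = \Theta(|E|)$ and $|E \cup E_{alg}| = \Theta(|E|)$, yielding $J(E,E_{alg}) = \Theta(1) \geq \Theta(1) \cdot J(E,E_{\opt})$. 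A secondary subtlety is making the facility-location reduction honest about the lower bound constraint $|C_u(E')| \geq k$: lower-bounded facility location gives clusters of size $\geq k$ directly, so this should go through, but one must be careful that the "merging small clusters" step some LBFL algorithms use does not break the majority structure — however, since step (i) shows the majority assignment is recomputed for whatever final clustering we get, this is not an issue, and the only thing that matters is that the final clustering respects the lower bound and has near-optimal cost.

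Finally I would remark that the constant in the approximation ratio is not optimized and inherits the (large) constant from the lower-bounded facility location subroutine; the point is polynomial time and a universal constant independent of $k$, $n$, $m$, and the density, which is the qualitative improvement over the $O(\log k)$-type guarantees that degrade to vacuity when a constant fraction of edges must change.
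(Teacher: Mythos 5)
Your high-level plan (the majority center is forced and optimal for a fixed clustering; reduce to lower-bounded clustering; translate a bound on the number of changed entries back to Jaccard using the hypothesis $J(E,E_{\opt})\ge 0.75$) matches the paper's framework, and your steps (i), (ii) and (iv) correspond exactly to the paper's Lemmas relating $\sum_u\Delta_u$ to Jaccard similarity. The gap is in step (iii): plugging in an off-the-shelf constant-factor algorithm for lower-bounded facility location / $r$-median, such as the $82.6$-approximation of Ahmadian--Swamy, does not prove the theorem at the $0.75$ threshold. Quantitatively, writing $J(E,E_{\opt})\ge 1-\phi$, one gets $\sum_u\Delta_u^{\opt}\le\frac{2\phi}{1-\phi}|E|$ and hence a lower-bounded $r$-median instance of value at most $\frac{4\phi}{1-\phi}|E|$; a $c$-approximation then yields only $J(E,E_{alg})\ge 1-\frac{2c\phi}{1-\phi}$, which is vacuous unless $\frac{2c\phi}{1-\phi}<1$. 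At $\phi=0.25$ this forces $c<1.5$, far below $82.6$; with $c=82.6$ the argument works only for $\phi\le 0.006$, i.e.\ $J(E,E_{\opt})\ge 0.994$, which is precisely the paper's weaker ``initial algorithm'' theorem rather than the statement you are asked to prove.

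To reach $0.75$ the paper uses an extra idea your proposal does not contain: a bicriteria relaxation. It builds a plain (unconstrained) facility location instance whose opening cost for candidate center $u_i$ is $\frac{2\alpha}{1-\alpha}\sum_{u'\in U^k_i}\dist(u',u_i)$, solves it with the $1.488$-approximation of Li, and obtains clusters of size only $\ge\alpha k$ at cost at most $1.488\cdot\frac{1+\alpha}{1-\alpha}$ times the lower-bounded $r$-median optimum (Svitkina's analysis); for small $\alpha$ this effective ratio sits just below the $1.5$ threshold. It then merges undersized clusters up to size $k$, and the crucial point is that this merging is charged \emph{multiplicatively} to the Jaccard similarity --- a probabilistic argument (pick a uniformly random surviving center among the at most $2/\alpha$ merged pieces) shows the Jaccard similarity drops by at most a factor $\frac{\alpha^2}{8}$ --- rather than additively to the number of changed entries. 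Since the approximation guarantee being proved is itself multiplicative in $J$, a tiny constant factor such as $\frac{\alpha^2}{8}$ is harmless, whereas the additive loss incurred by a large-constant lower-bounded-facility-location black box is fatal. Without this two-stage argument your proof establishes a correct but strictly weaker statement.
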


At a high level, our algorithm decomposes the users into clusters, each of size at least $k$. Then in each cluster $c$, for each item $f$, if the majority of the vertices in  $c$ have an edge to $f$, it adds edges to $f$ from all nodes in $c$; otherwise it removes the edges to $f$ from all nodes in $c$. 

\subsection{Preliminaries}
Let us start with some preliminary notions and lemmas. We will abuse notation slightly, and for a user $u$ and item $f$, say $u \in f$ if there is a $(u,f)$ edge in the graph. 

Note that we can represent each user $u$ with a point in a $m$ dimensional space. For an item $f_i$, we set the $i$-th dimension of $u$'s representation to $1$ if $u\in f_i$ and to $0$ otherwise. In this space we define the distance of two points, $u$ and $v$ to be the number of positions where $u$ and $v$ differ ({\em i.e.} their Hamming distance). 

Let $\Delta_u^{\alg}$ be the number of positions that the algorithm $\alg$ changes in the binary vector corresponding to the user $u$. Intuitively, $\sum_u  \Delta_u^{\opt}$ should be related to $J(E, E_{\opt})$. The following lemma formalizes this intuition.


\begin{lemma}\label{lm:Jaccard2Count}
Assume $J(E,E_{\opt}) \geq 1-\phi$. We have $\sum_{u} \Delta_u^{\opt} \leq \frac{2\phi}{1-\phi} |E|$.
\end{lemma}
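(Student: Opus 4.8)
The plan is to reduce everything to counting edges in symmetric differences. The first step is the identity $\sum_u \Delta_u^{\opt} = |E \oplus E_{\opt}|$: for a fixed user $u$, a coordinate $f$ of $u$'s binary vector is flipped by the optimal solution precisely when $(u,f) \in E \oplus E_{\opt}$, so $\Delta_u^{\opt} = |\{f \in F : (u,f) \in E \oplus E_{\opt}\}|$, and summing over $u$ simply partitions $E \oplus E_{\opt}$ according to its user-endpoint. Hence the lemma is equivalent to the purely set-theoretic claim $|E \oplus E_{\opt}| \le \tfrac{2\phi}{1-\phi}\,|E|$.

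Next I would translate the Jaccard hypothesis. Abbreviating $I := |E \cap E_{\opt}|$ and $V := |E \cup E_{\opt}|$, the assumption $I/V = J(E,E_{\opt}) \ge 1-\phi$ rearranges to $I \ge (1-\phi)V$, and since $|E \oplus E_{\opt}| = V - I$ this already gives $|E \oplus E_{\opt}| \le \phi V$. It then remains to bound $V$ in terms of $|E|$. Here I would use that $E \cap E_{\opt} \subseteq E$, so $|E| \ge I \ge (1-\phi)V$, i.e.\ $V \le |E|/(1-\phi)$; chaining, $|E \oplus E_{\opt}| \le \phi V \le \tfrac{\phi}{1-\phi}|E| \le \tfrac{2\phi}{1-\phi}|E|$. (If one prefers not to invoke the containment so directly, the same conclusion follows from $V = |E| + |E_{\opt}\setminus E| \le |E| + |E \oplus E_{\opt}| \le |E| + \phi V$ and solving for $V$; and the slightly lossy route $V \le |E| + |E_{\opt}| \le |E| + |E|/(1-\phi) \le 2|E|/(1-\phi)$ is exactly what produces the stated constant $2$.)

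There is no genuine obstacle here: no probabilistic or structural ingredient is needed, only bookkeeping. The one point to get right is the orientation of the inequalities — making sure $V$ is bounded \emph{above} by a multiple of $|E|$ rather than below — which is precisely where the containment $E \cap E_{\opt} \subseteq E$ (as opposed to $E_{\opt}$ being an unrelated comparison set) is used. Everything else is a one-line rearrangement.
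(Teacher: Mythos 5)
Your proof is correct and follows essentially the same route as the paper's: translate $J(E,E_{\opt})\ge 1-\phi$ into $|E\oplus E_{\opt}|\le \phi|E\cup E_{\opt}|$ and $|E\cup E_{\opt}|\le \frac{1}{1-\phi}|E|$, then chain the two. The only difference is the bookkeeping identity at the start: the paper writes $\sum_u \Delta_u^{\opt} = 2|E\oplus E_{\opt}|$ (which is where its factor $2$ is consumed), whereas you observe that each edge of $E\oplus E_{\opt}$ has a unique user endpoint so the sum equals $|E\oplus E_{\opt}|$, making your intermediate bound $\frac{\phi}{1-\phi}|E|$ a factor $2$ stronger than the stated one before you relax it to match.
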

All proofs from this section are in~\refappendix{app:sec:technical}.

To complement Lemma \ref{lm:Jaccard2Count}, the following lemma lower bounds $J(E,E_{\alg})$ given an upper bound on $\sum_{u} \Delta_u^{\alg}$.

\begin{lemma}\label{lm:Count2Jaccard}
Assume $\sum_{u} \Delta_u^{\alg} \leq \phi' |E|$. We have $J(E,E_{\alg}) \geq 1-\frac{\phi'}{2}$.
\end{lemma}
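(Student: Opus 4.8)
The plan is to rewrite the coordinate-counting hypothesis as a statement about edge sets and then convert an upper bound on the symmetric difference into a lower bound on the Jaccard similarity. First I would record the identity linking $\sum_u \Delta_u^{\alg}$ to the symmetric difference: each coordinate of user $u$'s $m$-dimensional vector that $\alg$ flips is exactly one cell $(u,f)$ of the matrix whose value changes, i.e.\ one edge incident to $u$ that is either inserted or deleted. Since every edge is incident to exactly one user on the user side of the bipartition, the user rows partition the changed cells, and therefore $\sum_u \Delta_u^{\alg} = |E \oplus E_{\alg}|$. Hence the hypothesis $\sum_u \Delta_u^{\alg} \le \phi'|E|$ is precisely $|E \oplus E_{\alg}| \le \phi'|E|$.

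Next I would pass to the Jaccard distance. From the set identity $|E \cup E_{\alg}| = |E \cap E_{\alg}| + |E \oplus E_{\alg}|$ we obtain
\[
1 - J(E,E_{\alg}) = \frac{|E \oplus E_{\alg}|}{|E \cup E_{\alg}|},
\]
so the goal reduces to upper bounding this ratio by $\phi'/2$. I would then split the symmetric difference into deletions $a = |E \setminus E_{\alg}|$ and insertions $b = |E_{\alg}\setminus E|$, so that $|E \cap E_{\alg}| = |E| - a$, $|E \cup E_{\alg}| = |E| + b$, and the hypothesis reads $a + b \le \phi'|E|$. The bound on $1 - J$ then becomes a matter of controlling $\frac{a+b}{|E| + b}$ subject to this constraint.

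The step I expect to be the crux is extracting the factor $1/2$ in the denominator. Bounding $|E \cup E_{\alg}|$ below by the trivial $|E|$ only turns the hypothesis into $1 - J \le \phi'$, i.e.\ $J \ge 1 - \phi'$; to reach the sharper $J \ge 1 - \phi'/2$ one must additionally exploit how the inserted edges enlarge the union (the extra $b$ in the denominator) against how the deleted edges shrink the intersection, or, when $\alg$ is \mediankanonymous, that within each cluster-feature pair the inserted edges are dominated by the retained majority edges, which bounds $b$ against $|E \cap E_{\alg}|$. Pinning down which of these refinements yields the sharp denominator bound is the main obstacle; once it is in hand, substituting $a + b \le \phi'|E|$ finishes the argument by a short direct calculation rather than a separate lemma.
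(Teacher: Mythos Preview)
Your overall plan---convert the hypothesis into a bound on $|E \oplus E_{\alg}|$ and then compare with $|E\cup E_{\alg}|$---is exactly the paper's. The place where you stall is the identity you write as $\sum_u \Delta_u^{\alg} = |E \oplus E_{\alg}|$. The paper's proof (and the companion proof of Lemma~\ref{lm:Jaccard2Count}) instead takes $\sum_u \Delta_u^{\alg} = 2\,|E \oplus E_{\alg}|$. With that factor of $2$, the hypothesis reads
\[
2\bigl(|E\cup E_{\alg}| - |E\cap E_{\alg}|\bigr)\ \le\ \phi'\,|E|\ \le\ \phi'\,|E\cup E_{\alg}|,
\]
and dividing through by $2\,|E\cup E_{\alg}|$ gives $1 - J(E,E_{\alg}) \le \phi'/2$ immediately---precisely the ``trivial'' step you set aside as too weak.

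So the missing $\tfrac12$ does not come from any structural property of the output. Your speculated route through the majority rule of \mediankanonymity cannot work: the lemma is stated for an arbitrary algorithm, and even restricting to smooth outputs, pure suppression ($b=0$, $a=\phi'|E|$) already gives $1-J=\phi'$ under your identity, so $J\ge 1-\phi'/2$ would be false. Your reading of the definition of $\Delta_u$ is arguably the more natural one, but the paper's two lemmas and all downstream constants (e.g.\ the $165.2\phi/(1-\phi)$ bound) are internally consistent with the factor-of-$2$ convention; to reproduce the paper's argument you should adopt $\sum_u \Delta_u^{\alg} = 2\,|E \oplus E_{\alg}|$, after which the proof is the two-line rearrangement above.
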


\subsection{Initial algorithm}
We now provide an approximation algorithm using a reduction to \emph{lower-bounded $r$-median}\footnote{We use $r$-median instead of $k$-median to avoid the confusion with the parameter $k$ in $k$-anonymity. We will use $r$ as the upper bound on the number of clusters and $k$ as the lower bound for the size of the clusters.}. In the next subsection we improve it using a slightly more complicated algorithm. 

In the lower-bounded $r$-median problem we are asked to select at most $r$ centers from $n$ points and assign each point to one center such that (i) the number of points assigned to each center is at least $k$, (ii) the total distance of the points from their assigned centers is minimized. We refer to each set of the points that are assigned to the same center as a cluster. In this paper we let $r=n/k$, which means that the algorithm may use as many centers as it needs, however, it must assign at least $k$ points to each center\footnote{This means there are at most $n/k$ centers.}. Here we use a $82.6$ approximation algorithm for lower-bounded $r$-median~\cite{ahmadian2012improved}, which is the best known result to the best of our knowledge.  We refer to this algorithm as $\alg_1$.
\begin{enumerate}
    \itemsep 0em
    \item Embed each user in $\mathbb{R}^m$ as described at the beginning of this section.
    \item Approximately solve the lower-bounded $r$-median on the points (for $r=\nicefrac{n}{k}$).
    \item For each cluster $c$, for each item $f$, if most vertices in $c$ have an edge to $f$, add all edges from nodes in $c$ to $f$, otherwise  remove all edges from nodes in $c$ to $f$. 
\end{enumerate}
Note that by definition of lower-bounded $r$-median, each cluster contains at least $k$ points. Moreover, the data that we output for users that belong to the same cluster are the same. Hence, the output satisfies the anonymity part of the \mediankanonymity condition. Moreover, the output satisfies the majority part of the \mediankanonymity assumptions. Next we bound $J(E,E_{\alg_1})$ assuming $J(E,E_{\opt})\geq 1-\phi$.

For analysis sake, we introduce the \emph{relaxed lower-bounded $r$-median} problem in which we are allowed to select any possible discrete point in the space as a center (as opposed to being restricted to select centers only from the points that appear in the input). Note that, if we take a solution to relaxed lower-bounded $r$-median and move each center to its closest point (that appears in the input), by triangle inequality the cost of the solution increases by at most a factor $2$. Therefore the cost of lower-bounded $r$-median is at most twice that of relaxed lower-bounded $r$-median.

By Lemma \ref{lm:Jaccard2Count} we have $\sum_{u} \Delta_u^{\opt} \leq \frac{2\phi}{1-\phi} |E|$. We now prove there exists a solution to relaxed lower-bounded $r$-median with cost at most $\frac{2\phi}{1-\phi} |E|$. Take an optimal anonymous solution and consider the equivalence classes of nodes with the same neighborhood. This induces a clustering of the nodes with clusters of size at least $k$. Now, observe that the total number of entries changed is equal to the sum of distances from the output neighborhood (of each class) and the original nodes. So this shows that there exists a clustering with sizes at least $k$ with total cost $\sum_{u} \Delta_u^{\opt}$.

Therefore, there exists a solution to lower-bounded $r$-median with cost at most $\frac{4\phi}{1-\phi} |E|$. Note that we are using an $82.6$-approximation algorithm to find lower-bounded $r$-median. Hence, the total cost of our solution is at most $\frac{330.4\phi}{1-\phi} |E|$. The last line of the algorithm does not increase the total cost (since it selects the best center for each cluster). Hence we have $\sum_{u} \Delta_u^{\alg_1 } \leq \frac{330.4\phi}{1-\phi} |E|$. By applying this to Lemma \ref{lm:Count2Jaccard} we have $J(E,E_{\alg_1}) \geq 1- \frac{165.2\phi}{1-\phi}$. This is a positive constant for any $\phi \leq 0.006$. This implies the following theorem.

\begin{theorem}
Assume $J(E,E_{\opt}) \geq 0.994$. There exists an algorithm that finds a constant approximation  \mediankanonymization  of $G$ in polynomial time.
\end{theorem}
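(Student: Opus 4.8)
The plan is to prove the statement by analyzing exactly the reduction to lower-bounded $r$-median that defines $\alg_1$, tracking all constants so that the guarantee survives at $\phi := 1 - J(E,E_{\opt}) \le 0.006$.

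First I would convert the Jaccard hypothesis into a bound on the total number of entry changes of an optimal \mediankanonymous solution: Lemma~\ref{lm:Jaccard2Count} applied with $1-\phi = J(E,E_{\opt})$ gives $\sum_u \Delta_u^{\opt} \le \tfrac{2\phi}{1-\phi}|E|$. Next I would exhibit a cheap feasible solution to the \emph{relaxed} lower-bounded $r$-median instance on the indicator embedding of the users in $\{0,1\}^m$: take an optimal anonymous solution, let its equivalence classes (of users with identical output neighborhoods) be the clusters, and let the common output neighborhood of each class be its center. Each class has size $\ge k$, so there are at most $n/k$ of them and feasibility holds with $r = n/k$; and the total Hamming cost of this clustering is exactly $\sum_u \Delta_u^{\opt} \le \tfrac{2\phi}{1-\phi}|E|$. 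Relocating each center to a nearest input point in its own cluster at most doubles the cost (triangle inequality), so the non-relaxed optimum is at most $\tfrac{4\phi}{1-\phi}|E|$.

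Then I would run the $82.6$-approximation of~\cite{ahmadian2012improved} for lower-bounded $r$-median to obtain a clustering of cost at most $\tfrac{330.4\phi}{1-\phi}|E|$, and apply the last step of $\alg_1$ --- per cluster, per item, keep the edge for the whole cluster if a majority of its members had it, else delete it for the whole cluster. This is precisely the per-coordinate majority vote, which is the Hamming-cost minimizer within each cluster, so it does not increase the cost; simultaneously it forces all users in a cluster to share a neighborhood (so the output is $k$-anonymous, clusters having size $\ge k$) and makes every retained edge supported by at least half the cluster (so the output is smooth). Hence $\sum_u \Delta_u^{\alg_1} \le \tfrac{330.4\phi}{1-\phi}|E|$, and Lemma~\ref{lm:Count2Jaccard} gives $J(E,E_{\alg_1}) \ge 1 - \tfrac{165.2\phi}{1-\phi}$. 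Plugging $\phi \le 0.006$ makes the right-hand side a positive absolute constant; since $J(E,E_{\opt}) \le 1$ always, the ratio $J(E,E_{\alg_1})/J(E,E_{\opt})$ is bounded below by that same constant, and the whole pipeline runs in polynomial time.

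The hard part is not any individual step but keeping the multiplicative losses under control: the factor $2$ from rounding centers to input points and the factor $82.6$ from the clustering approximation compound, so the argument only works because $\phi$ is forced to be tiny --- indeed $165.2\,\phi/(1-\phi) < 1$ barely holds at $\phi = 0.006$. The other points needing care are verifying that the equivalence-class clustering is genuinely a feasible lower-bounded $r$-median solution (sizes $\ge k$ and at most $n/k$ clusters) and that the majority-rounding step is simultaneously cost-nonincreasing, anonymity-producing, and smoothness-producing; all three are straightforward once the embedding is in place, which is why this warm-up theorem needs the restrictive hypothesis $J(E,E_{\opt}) \ge 0.994$ that the subsequent, more careful algorithm relaxes to $0.75$.
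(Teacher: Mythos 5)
Your proposal is correct and follows essentially the same route as the paper's own argument: reduce to lower-bounded $r$-median via the equivalence-class clustering of the optimum, lose a factor $2$ rounding centers to input points and a factor $82.6$ from the approximation algorithm, observe that per-coordinate majority rounding is cost-nonincreasing and yields smooth $k$-anonymity, and close with Lemmas~\ref{lm:Jaccard2Count} and~\ref{lm:Count2Jaccard} to get $J(E,E_{\alg_1}) \ge 1 - \tfrac{165.2\phi}{1-\phi}$, positive for $\phi \le 0.006$. The constants and the order of the steps match the paper exactly, so there is nothing to add.
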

Of course, having $J(E,E_{\opt}) \geq 0.994$ is a very strong assumption. Next we substantially relax this requirement.

\subsection{Improved algorithm}\label{alg:improved}
 To prove a better algorithm we will use the $1.488$ approximation algorithm for the metric facility location problem~\cite{li20131} as a subroutine. In the metric facility location problem we are given a set of points and a set of facilities in a metric space, with an opening cost for each facility. The objective is to select a set of facilities and assign each point to a facility such that the total cost of the selected facilities plus the total distance of the points from their assigned facilities is minimized. Again here, we refer to the set of points assigned to each facility as a cluster. 

Below is our algorithm $\alg_2$. This algorithm depends on a parameter $\alpha$ that we set later.
\begin{enumerate}
    \itemsep 0em
    \item Embed each user in $\mathbb{R}^m$ as before.
    \item For each user $u_i$ define a facility with the same coordinates and opening cost $\frac{2\alpha}{1-\alpha} \sum_{u'\in U^k_i} \dist(u',u_i)$, where $U^k_i$ is the set of $k$ closest points to $i$.\label{line2}
    \item Approximately solve the facility location instance. \label{line3}
    \item Iteratively, remove each cluster with fewer than $\alpha k$ points and assign its points to their second closest facility. \label{line4.1}
    \item Arbitrarily merge clusters with size less than $k$ to reach size $k$, but do not let the clusters grow larger than $2k$.~\footnote{If needed, break a large cluster into some clusters of size at least $\alpha k$, so that the total size of small clusters is more than $k$. This modification does not change the proof.} \label{line4.2}
    \item For each cluster $c$, for each item $f$, if most vertices in $c$ have an edge to $f$, add all edges from nodes in $c$ to $f$, otherwise  remove all edges from nodes in $c$ to $f$. 
\end{enumerate}
%
\begin{theorem}\label{thm:main:improved}
Assume $J(E,E_{\opt}) \geq 0.75$. Algorithm $\alg_2$ run with an $\alpha \in O(1)$ finds a constant approximation to \mediankanonymization of $E$ in polynomial time.
\end{theorem}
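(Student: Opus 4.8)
The plan is to follow the same skeleton as the analysis of $\alg_1$: bound $\sum_u \Delta_u^{\alg_2}$ by a constant times $|E|$ and then invoke Lemma~\ref{lm:Count2Jaccard} to conclude a constant Jaccard similarity (hence a constant approximation, since $J(E,E_{\opt})\le 1$). Writing $\phi = 1 - J(E,E_{\opt}) \le 0.25$, Lemma~\ref{lm:Jaccard2Count} gives a clustering (the equivalence classes of $E_{\opt}$) with clusters of size $\ge k$ and total point-center cost $\le \frac{2\phi}{1-\phi}|E|=:\mathrm{OPT}_{\mathrm{LB}}$. The issue with $\alg_1$ was the multiplicative $82.6$ from lower-bounded $r$-median; here we instead pay for the lower-bound constraint through the opening costs in the facility-location instance of step~\ref{line2}, which only costs us additive $O(1)$ factors and the much better $1.488$ approximation of~\cite{li20131}.

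First I would show that the optimal lower-bounded clustering yields a feasible facility-location solution of cost $O(\mathrm{OPT}_{\mathrm{LB}})$: open a facility at (a point near) each $E_{\opt}$-center; each such center has $\ge k$ assigned points, so its opening cost $\frac{2\alpha}{1-\alpha}\sum_{u'\in U^k_i}\dist(u',u_i)$ is at most a constant times the cost that cluster already pays (the $k$ nearest points are no farther than its actual assigned points, up to moving the center to a data point — a factor $2$ by triangle inequality). Adding the assignment cost, the facility-location optimum is $O(\mathrm{OPT}_{\mathrm{LB}})$, so step~\ref{line3} returns a solution of cost $O(\mathrm{OPT}_{\mathrm{LB}})$ as well. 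Next I would control the damage done by the cleanup steps~\ref{line4.1}--\ref{line4.2}. When a cluster with $<\alpha k$ points is dissolved in step~\ref{line4.1} and its points are reassigned to their second-closest open facility, the extra distance incurred can be charged against that facility's opening cost: the key point is that the opening cost of the surviving facility is $\Omega(k)$ times the typical intra-cluster radius, while only $<\alpha k$ stray points need rerouting, so choosing $\alpha$ a small enough constant makes this a bounded blow-up. (One must iterate this argument since dissolving one cluster may shrink another; a standard potential/amortization argument over the $\le n/k$ facilities handles the iteration, as hinted by "iteratively".) Then step~\ref{line4.2}, merging small clusters and splitting oversized ones while keeping every size in $[k,2k]$ (with the footnote's fix), changes each affected point's center by at most the diameter of the merged group, again chargeable to opening costs / the already-paid radius; this is where the upper cap $2k$ matters, to keep the merged diameter comparable.

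After these steps every cluster has size in $[k,2k]$, so the output of the final majority step is \mediankanonymous by the same reasoning as for $\alg_1$ (identical output rows within a cluster, majority rule for added edges). It remains to bound $\sum_u\Delta_u^{\alg_2}$: since the final step replaces each cluster's rows by their coordinate-wise majority, which is the $1$-median of the cluster under Hamming distance, the total number of changed entries is at most the point-center cost of the clustering we ended with, i.e. $O(\mathrm{OPT}_{\mathrm{LB}}) = O\!\left(\frac{2\phi}{1-\phi}\right)|E|$. For $\phi\le 0.25$ this is $O(1)\cdot|E|$ with an explicit constant, and Lemma~\ref{lm:Count2Jaccard} then gives $J(E,E_{\alg_2})\ge 1 - O(\phi) > 0$, a positive constant, completing the constant-approximation claim. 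The main obstacle I anticipate is step~\ref{line4.1}: making the iterative reassignment charging rigorous — in particular ensuring that a point is not rerouted many times and that each reroute is paid for by a distinct, still-positive chunk of some opening cost — is the delicate part, and it is precisely what forces $\alpha$ to be a sufficiently small (but $O(1)$) constant and what the footnote about breaking large clusters is guarding against.
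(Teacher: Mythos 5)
Your overall skeleton (reduce to lower-bounded $r$-median via Lemmas~\ref{lm:Jaccard2Count} and \ref{lm:Count2Jaccard}, pay for the lower-bound constraint through the facility opening costs, then clean up small clusters) matches the paper, and your treatment of steps~\ref{line2}--\ref{line4.1} is essentially a sketch of the bicriteria guarantee that the paper simply cites from Svitkina: after those lines every cluster has size at least $\alpha k$ and the cost is at most $1.488\cdot\frac{1+\alpha}{1-\alpha}$ times the lower-bounded $r$-median optimum. That part is fine in spirit, even if your charging sketch for line~\ref{line4.1} is incomplete.

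The genuine gap is in your treatment of the merging step~\ref{line4.2}. You charge the reassignment cost to ``the diameter of the merged group,'' claiming the $2k$ cap keeps that diameter controlled. But the merging in line~\ref{line4.2} is \emph{arbitrary}: two clusters of size between $\alpha k$ and $k$ may be merged even if they sit at Hamming distance $m$ from each other, so the diameter of a merged group is not bounded by any opening cost or already-paid radius, and the additive charging argument collapses. (The cap $2k$ is not about diameters at all; it guarantees that each merged cluster contains at most $2/\alpha$ of the surviving clusters.) The paper handles this step with a fundamentally different, \emph{multiplicative} argument: for each merged cluster, picking the center of one constituent cluster uniformly at random preserves each edge of $E\cap E_{\text{init}}$ with probability at least $\alpha/2$ while blowing up $|E_{\text{init}}|$ by at most $2/\alpha$, so the best center loses at most a factor $\frac{\alpha^2}{8}$ in Jaccard similarity. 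Consequently the paper's final bound is $J(E,E_{\alg_2})\ge\frac{\alpha^2}{8}\bigl(1-2.976\cdot\frac{1+\alpha}{1-\alpha}\cdot\frac{\phi}{1-\phi}\bigr)$ --- a small positive constant for $\alpha=0.004$ and $\phi\le 0.25$ --- not the $1-O(\phi)$ you claim, which is stronger than what the algorithm actually achieves and rests precisely on the step that fails.
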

%
\begin{proof}
Svitkina~\cite{svitkina2010lower} showed that Lines \ref{line2} to \ref{line4.2} give solution in which (i) the size of each cluster is at least $\alpha k$, and (ii) the cost of the solution is at most $1.488 \cdot\frac{1+\alpha}{1-\alpha}$ times that of lower-bounded $r$-median.
Similar to our previous algorithm, since the size of each cluster is at least $k$, the last line of the algorithm guarantees  \mediankanonymity. Next we bound $J(E,E_{\alg_2})$ assuming $J(E,E_{\opt})\leq 1-\phi$.
We refer to the first four lines of algorithm $\alg_2$ as $\alg'_2$.
Similar to the previous subsection we know that there exists a solution to lower-bounded $r$-median with cost at most $\frac{4\phi}{1-\phi} |E|$. Therefore the cost of the solution to the facility location problem is at most 
$$5.952 \cdot\frac{1+\alpha}{1-\alpha} \cdot \frac{\phi}{1-\phi} |E|.$$ 
Again similar to the previous subsection and by applying Lemma \ref{lm:Count2Jaccard} we have 
$$J(E,E_{\alg'_2}) \geq 1- 2.976 \cdot\frac{1+\alpha}{1-\alpha} \cdot \frac{\phi}{1-\phi}.$$ 
Later we show that Line \ref{line4.2} decreases the Jaccard similarity by at most a factor $\frac{\alpha^2}{8}$. Hence we have $$J(E,E_{\alg_2}) \geq\frac{\alpha^2}{8}\Big(1- 2.976 \cdot\frac{1+\alpha}{1-\alpha} \cdot \frac{\phi}{1-\phi}\Big),$$ 
which is a positive constant for $\phi \leq 0.25$ and $\alpha=0.004$. This implies Theorem~\ref{thm:main:improved}.

New we show that Line \ref{line4.2} decreases the Jaccard similarity by a factor of at least $\frac{\alpha^2}{8}$. To prove this, we use the probabilistic method. For each cluster we show a random point such that if we move the points in each cluster to its corresponding random point, the expected Jaccard similarity is at least $\frac{\alpha^2}{8}$ times that of the initial Jaccard similarity. This implies that there exists a fixed (i.e., deterministic) set of points such that if we move the points in each cluster to its corresponding fixed point, the expected Jaccard similarity is at least $\frac{\alpha^2}{8}$ times that of the initial Jaccard similarity. Note that, for each cluster, we are selecting the optimum center, and hence this statement holds for our selected centers as well.

Let $E_{\text{init}}$ be the edge set corresponding to the clustering prior to Line \ref{line4.2}.
For each merged cluster we select the center of one of its initial clusters uniformly at random. Note that each merged cluster contains at most $2/\alpha$ initial clusters.
we select the center of each initial cluster with probability at least $\alpha/2$. This means that each edge that exists in $E\cap E_{\text{init}}$ exists after merging the clusters with probability at least $\alpha/2$. Hence, the expected number edges that exists after merging is at least $\frac{\alpha}{2} |E\cap E_{\text{init}}|$.

Moreover, the number of nodes in each initial cluster is at most $\frac{\alpha k}{2k}$ fraction of that of the merged cluster. Hence, the total number of edges increases by a factor of at most $2/\alpha$ after merging and moving the point to the random centers. Hence, the expected Jaccard similarity after the merge is at least
\begin{align*}
    \frac{\frac{\alpha}{2} |E\cap E_{\text{init}}|}{\frac{2}{\alpha}|E_{\text{init}}|+ |E|} & \geq  \frac{ \frac{\alpha}{2} |E\cap E_{\text{init}}|}{ 2 \frac{2}{\alpha} |E\cup E_{\text{init}}|} = \frac{\alpha^2}{8}\frac{  |E\cap E_{\text{init}}|}{  |E\cup E_{\text{init}}|},
\end{align*}
as claimed. This completes the proof of Theorem \ref{thm:main:improved}.
\end{proof}

\begin{figure*}
\centering
\subfigure[adult]{\includegraphics[width=0.31\textwidth,keepaspectratio]{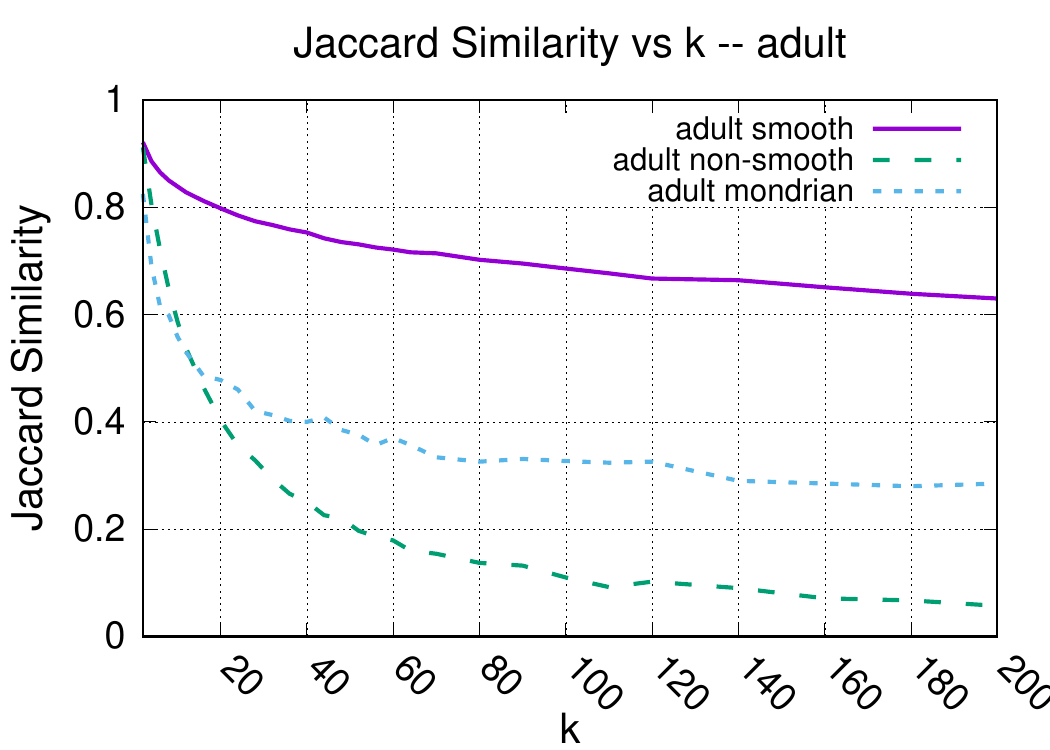}\label{fig:result-jaccard-adult}}
\subfigure[user-list]{\includegraphics[width=0.31\textwidth,keepaspectratio]{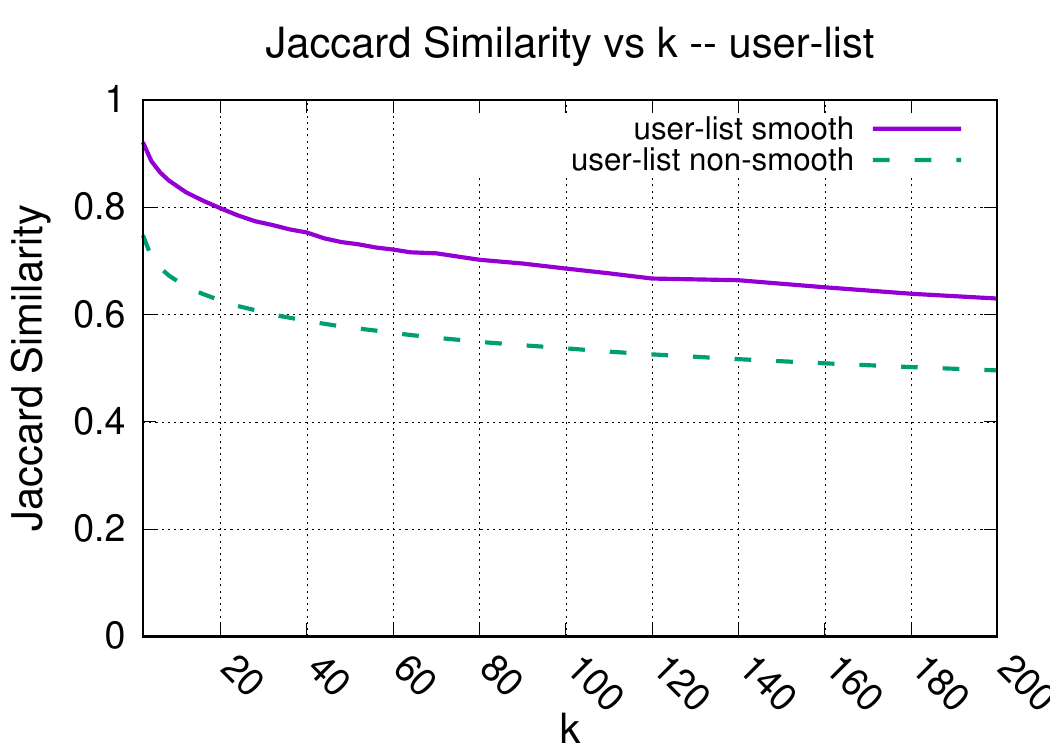}\label{fig:result-jaccard-user_list}}
\subfigure[$\epsilon$-edge DP]{\includegraphics[width=0.31\textwidth,keepaspectratio]{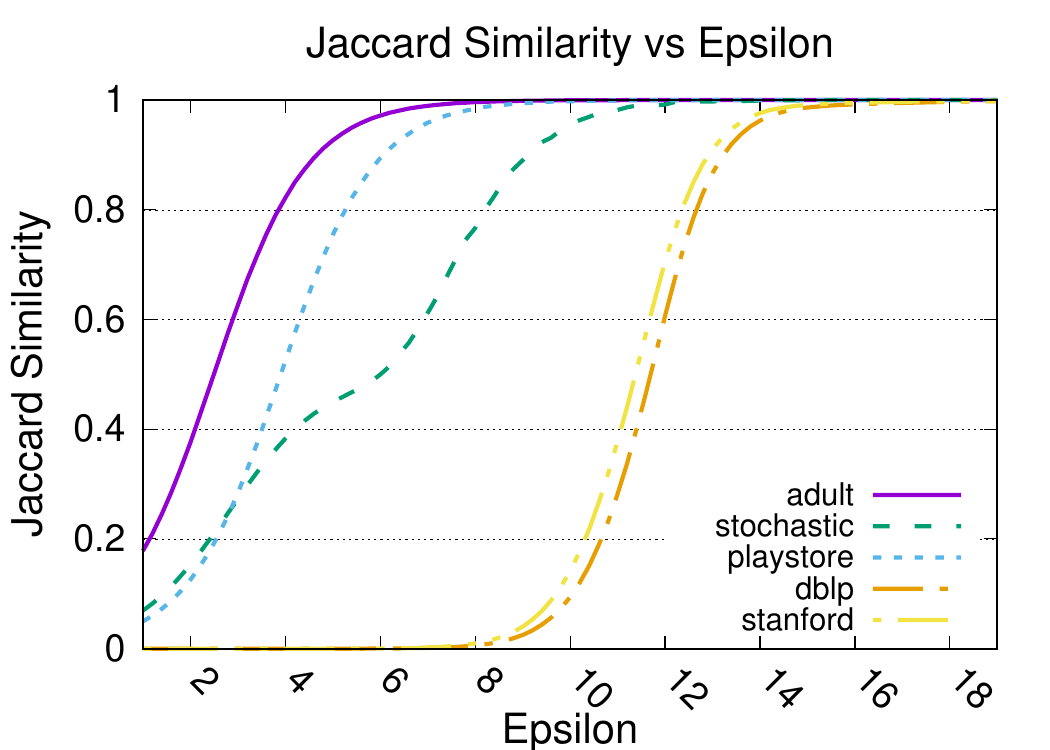}~\label{fig:result-eps}}
\caption{Mean Jaccard similarity for the various datasets and algorithms. \label{fig:result-jaccard}}.
\end{figure*}

\section{Experimental results}
\label{sec:exp}

We give a brief overview of the empirical performance of our algorithms. We give the full details of the setup as well as additional empirical results in~\refappendix{app:exp}.
 We will release an {\it open-source} version of our code by the camera-ready deadline.

\textbf{Datasets}
We used representative examples of sparse binary matrices (and bipartite graphs) of different origins and structural properties. The scales of the datasets are up to $>1B$ rows, $>100k$ columns, $> 10B$ entries (for our largest dataset) and the densities of the matrices range from $10^{-5}$ to $0.1$.
We used one synthetic dataset, four publicly-available real-world datasets as well as one large-scale proprietary dataset from a major internet company. These are as follows: 
{\bf stochastic} is generated from the stochastic block model described in Section~\ref{sec:stochastic};
{\bf adult}\footnote{\url{https://archive.ics.uci.edu/ml/datasets/adult}} and {\bf playstore}\footnote{\url{https://www.kaggle.com/lava18/google-play-store-apps}} consist of sparse binary matrices; 
{\bf dblp}~\cite{yang2015defining}, {\bf stanford}~\cite{leskovec2009community} consists of adjacency matrices of sparse bipartite graphs and, 
{\bf user-lists} is a proprietary dataset from a major internet company containing user-interest relationships.

{\bf Metrics} We measure utility using the {\it Jaccard similarity} between the set of the edges as defined in the paper as well as the number of {\it suppressed entries} and {\it created entries}. 

\textbf{Experimental infrastructure}
Our algorithm is implemented as a single-threaded C++ problem and is run on standard commodity hardware, with the exception of runs on the large-scale user-list dataset. For this dataset, we evaluated a simple heuristic to parallelize our algorithm.  (See \refappendix{app:exp} for more details.)

\textbf{Baselines and Algorithms}
As baselines we use the {\it Mondrian} anonymization algorithm~\cite{lefevre2006mondrian} implementation\footnote{\url{https://github.com/qiyuangong/Mondrian}} which enforces $k$-anonymity by suppression, as well as the randomized response algorithm of Section~\ref{sec:dp-comparison} which enforces $\epsilon$ edge- or node-differential privacy.  We also compare our algorithm for {\it smooth} $k$-anonymization with an additional baseline {\it non-smooth} (which uses a simple heuristic to obtain {\it (standard)} $k$-anonymity by suppression using the clusters obtained by our algorithm). Our algorithm is always run with $\alpha = 1/2$, which in practice gives good results for all datasets (we observe that values of $\alpha$ in $[1/8, 1/2]$ have similar results).

\begin{table}
\centering
\small
\begin{tabular}{llrrr}
\toprule
   dataset        & algorithm       &  Jaccard &  Supp. &  Created \\
\midrule
adult & mondrian &    59.9\% &               40.1\% &             0.0\% \\
           & non-smooth &    64.8\% &               35.2\% &             0.0\% \\
           & smooth &    85.0\% &                8.9\% &             7.2\% \\
playstore & mondrian &    51.2\% &               48.8\% &             0.0\% \\
           & non-smooth &    39.2\% &               60.8\% &             0.0\% \\
           & smooth &    66.1\% &               26.2\% &            11.6\% \\
user\_lists & non-smooth &    67.3\% &               32.7\% &             0.0\% \\
           & smooth &    71.0\% &               27.7\% &             1.9\% \\
\bottomrule
\end{tabular}
\caption{Average results for $k=8$ for various algorithms and dataset. Jac., S.E. and C.E. stand for, respectively, Jaccard similarity and fraction of suppressed entries and newly created entries (both normalized by the entries in the input dataset).\label{tab:results-k8}}
\end{table}

\textbf{Jaccard similarity vs $k$}
First, we evaluate the quality of our algorithm for smooth-$k$-anonymity for different $k$ values and we compare it with that of the (non-smooth) $k$-anonymity solution and {\it mondrian}. In Figures~\ref{fig:result-jaccard-adult} and \ref{fig:result-jaccard-user_list} we show a sample of plots of the mean Jaccard similarity for a given setting of the $k$ parameter for smooth-$k$-anonymity (solid line), non-smooth anonymity (dashed line) and mondrian (dotted). We were not able to run the mondrian algorithm on the larger datasets because, contrary to our algorithm, it scales with the size of the full $n \times m$ matrix size ($m$ number of columns) and it does not exploit the sparsity of the matrix. As expected, the Jaccard similarity decreases with increasing $k$, but at every $k$ level smooth-$k$-anonymity allows to obtain significantly better results than all baselines (in some cases even twice better). 
We report more detailed results in Table~\ref{tab:results-k8} for $k=8$. Notice that our smooth algorithm allows significantly higher jaccard similarity (and lower suppressed entries) for a small increase in created entries. For instance, in adult, the number of suppressed entries is decreased by $\sim 26\%$ with just a $\sim 7\%$ increase in added entries.

\begin{figure}
\centering
\includegraphics[width=0.75\textwidth,keepaspectratio]{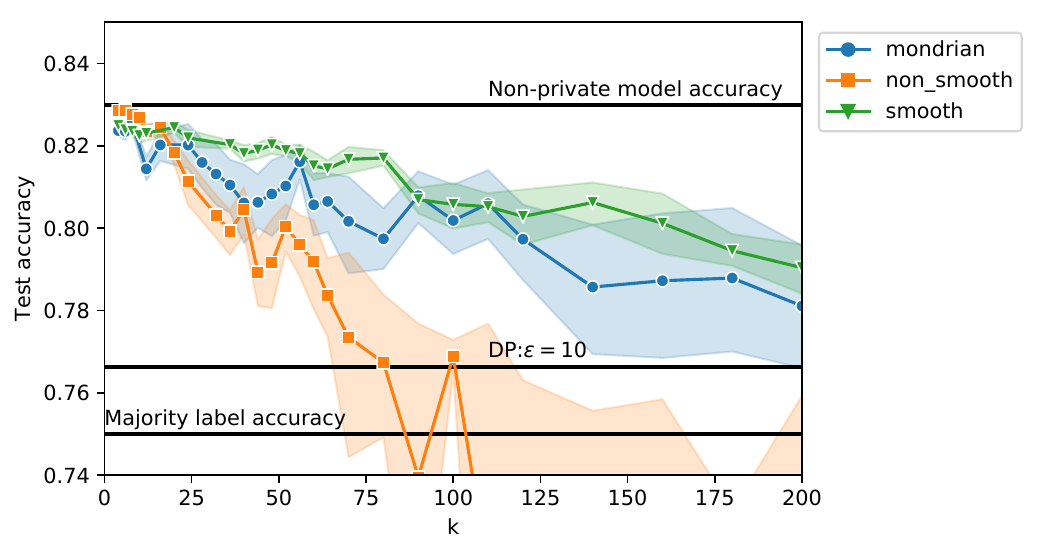}
\caption{Accuracy in learning task in anonymous data~\label{fig:ml-kanon}. We also include a baseline of using only the majority label as well as training a model without anonymity and a model that uses node differential privacy with $\epsilon=10$}
\end{figure}

\textbf{Differential privacy}
We now evaluate the Jaccard similarity obtained by the $\epsilon$-differentially private method.  We report the results in Figure~\ref{fig:result-eps}. Here we report results for the lower protection level of $\epsilon$-edge differential privacy, as $\epsilon$-node differential privacy protection generates results close to random outputs. 
As expected (see Section~\ref{sec:dp-comparison}) the sparser the dataset the worse the performance of differential privacy at parity of $\epsilon$. Notice how to get Jaccard similarity above $0.5$ in stanford or dblp, $\epsilon$ must be $10$ which is too large to provide strong guarantees. 
We can use these results to compare $k$-anonymity and $\epsilon$-differential from their a standpoint.  We observe that depending on the dataset an anonymity of $k=16$ might require an $\epsilon$ as large as $11$ to obtain the same utility. 

\textbf{Learning from anonymous data}
Finally, we report results on using the anonymized datasets in a downstream machine learning task. We use the anonymized version of the adult dataset to learn a classifier for the standard classification task of predicting whether an adult's income is $\geq\$50k$ per year. The results are reported in Figure~\ref{fig:ml-kanon}. 
Notice our algorithm performs better (or on par) with the best baseline (mondrian). We observe (see \refappendix{app:exp}) that smooth performs significantly better than the $\epsilon$-node differentially private algorithm with $\epsilon = 10$ even for $k=200$, mirroring the degradation seen in the Jaccard similarity metric. 

\section{Conclusion}
We presented a new notion of anonymity which relaxes standard $k$-anonymity and allows us to obtain better guarantees and improved results in downstream machine learning tasks. Our algorithms reduce the anonymization problem to clustering with lower bounds and require non-trivial analysis to prove approximation guarantees. Many interesting questions remain, including  strengthening our bounds for smooth-$k$-anonymity and better understanding the interplay between the various notions of privacy.

\section{Acknowledgement}
The authors are thankful to Sergei Vassilvitskii for his involvement in early stages of this paper and his useful discussions and suggestions.
\bibliographystyle{ACM-Reference-Format}
\bibliography{references}

\ifappendix
\appendix

\clearpage
\newpage
\section{Technical lemma}
\begin{lemma}
\label{lem:ratio}
Let $a, b, x >0$  be such that  $x < \frac{b-a}{2}$ then
\begin{equation}
    \label{eq:ratio}
    \frac{a - x}{b-x} < \frac{a}{b} + 2 x.
\end{equation}
\end{lemma}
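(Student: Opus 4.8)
The plan is to notice that the hypothesis does all the work: the condition $0 < x < \frac{b-a}{2}$ already forces $b > a$ and $b - x > 0$, and from these two facts alone one gets the \emph{stronger} inequality $\frac{a-x}{b-x} < \frac{a}{b}$; the extra term $2x$ on the right-hand side of \eqref{eq:ratio} is then pure slack, available for free because $x > 0$.

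First I would unpack the hypothesis. Since $x > 0$ and $x < \frac{b-a}{2}$, we have $b - a > 2x > 0$, hence $b > a$; and $b - x > b - \frac{b-a}{2} = \frac{a+b}{2} > 0$, so the denominator $b-x$ is strictly positive and the left-hand fraction is well defined (and the upcoming common-denominator manipulation does not secretly multiply an inequality by a negative number). Next I would compare $\frac{a-x}{b-x}$ with $\frac{a}{b}$ by subtracting over a common denominator:
\[
\frac{a-x}{b-x} - \frac{a}{b} \;=\; \frac{b(a-x) - a(b-x)}{b(b-x)} \;=\; \frac{ax - bx}{b(b-x)} \;=\; \frac{x(a-b)}{b(b-x)}.
\]
Here $b > 0$ and $b - x > 0$ by the first step, while $x(a-b) < 0$ because $x > 0$ and $a < b$; so the difference is strictly negative, i.e. $\frac{a-x}{b-x} < \frac{a}{b}$. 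Finally, since $x > 0$ we have $\frac{a}{b} < \frac{a}{b} + 2x$, and chaining the two inequalities yields $\frac{a-x}{b-x} < \frac{a}{b} + 2x$, which is \eqref{eq:ratio}.

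There is essentially no obstacle: the only place needing a moment's attention is confirming $b - x > 0$ before manipulating the fractions (equivalently, checking that we never divide or multiply through by a nonpositive quantity), and this is precisely what the inequality $x < \frac{b-a}{2}$ guarantees. I would not bother optimizing the hypothesis, but I would remark that $b > a$ together with $x < b$ would already suffice, so the stated condition is just a convenient sufficient form.
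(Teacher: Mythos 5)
Your proof is correct, but it takes a different (and cleaner) route than the paper's. The paper clears denominators against the full right-hand side $\frac{a}{b}+2x$ and tries to show the resulting polynomial inequality is equivalent to the hypothesis $x<\frac{b-a}{2}$; you instead observe that the hypothesis forces $b>a$ and $b-x>0$, prove the strictly stronger bound $\frac{a-x}{b-x}<\frac{a}{b}$ by a one-line common-denominator computation, and absorb the $+2x$ as free slack. Your decomposition buys two things: it exposes that the additive $2x$ term in the lemma is never actually needed (only positivity of $x$ is used there), and it avoids the messier rearrangement in the paper, whose displayed inequality $ab-bx<ab-ax+2bx-2x^2$ is in fact not what cross-multiplication by $b(b-x)$ gives (the last two terms are each missing a factor of $b$), and whose claimed ``if and only if $x<\frac{b-a}{2}$'' does not hold as stated. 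In short, your argument is sound, complete, and arguably repairs the paper's own write-up; the only thing both proofs share is the initial observation that $b-x>0$ so that no sign errors occur when manipulating the fractions.
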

\begin{proof}
From the condition on $x$ it follows that $b - x>0$. Therefore rearranging terms in \eqref{eq:ratio} yields the equivalent inequality:
\begin{equation*}
    ab - bx < ab - ax + 2bx  - 2 x^2.
\end{equation*}
This inequality holds if and only if $x < \frac{b-a}{2}.$
\end{proof}

\section{Omitted proofs from Section~\ref{sec:dp-comparison}}
\subsection{Randomized response is $\epsilon$-differentially private}\label{app:dp}

Consider the following procedure. For each element, with probability $1-\frac{2}{e^{\epsilon}+1}$ we report the true value, otherwise with probability $\frac{1}{e^{\epsilon}+1}$ we report $0$ and with probability $\frac{1}{e^{\epsilon}+1}$ we report $1$. We refer to this procedure as $\epsilon$-DP. Next theorem shows that this procedure is $\epsilon$-differentially private.
\begin{theorem}
$\epsilon$-DP preserves $\epsilon$ edge differential privacy.
\end{theorem}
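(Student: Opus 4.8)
The plan is to exploit the fact that the $\epsilon$-DP mechanism treats the $nm$ potential edge slots $(u,f) \in U \times F$ independently, so it is a product of identical single-bit channels. Identifying each output graph with its binary incidence vector indexed by these slots, for any fixed output graph $H$ we have
\begin{equation*}
\frac{P(\cM(G) = H)}{P(\cM(G') = H)} = \prod_{(u,f)} \frac{P\big(\text{slot }(u,f)\text{ equals }H_{u,f}\mid G\big)}{P\big(\text{slot }(u,f)\text{ equals }H_{u,f}\mid G'\big)} .
\end{equation*}
If $|E \oplus E'| = 1$, then $G$ and $G'$ agree on every slot except one, say $(u^\star,f^\star)$, so every factor in this product equals $1$ except that one, and the whole ratio reduces to the single factor coming from slot $(u^\star, f^\star)$.

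Next I would compute the $2\times 2$ transition matrix of the single-bit channel. Conditioning on the true bit being $1$, the reported bit is $1$ with probability $\big(1-\tfrac{2}{e^\epsilon+1}\big) + \tfrac{1}{e^\epsilon+1} = \tfrac{e^\epsilon}{e^\epsilon+1}$ and $0$ with probability $\tfrac{1}{e^\epsilon+1}$; conditioning on the true bit being $0$, it is $1$ with probability $\tfrac{1}{e^\epsilon+1}$ and $0$ with probability $\tfrac{e^\epsilon}{e^\epsilon+1}$. Hence for the differing slot the ratio of the two conditional probabilities is either $\tfrac{e^\epsilon/(e^\epsilon+1)}{1/(e^\epsilon+1)} = e^\epsilon$ or $\tfrac{1/(e^\epsilon+1)}{e^\epsilon/(e^\epsilon+1)} = e^{-\epsilon}$, and in both cases it is at most $e^\epsilon$. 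Combined with the previous paragraph, this gives $P(\cM(G)=H) \le e^\epsilon\, P(\cM(G')=H)$ for every output graph $H$.

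Finally, summing this pointwise bound over all $H \in A$ yields $P(\cM(G)\in A) \le e^\epsilon\, P(\cM(G')\in A)$ for every $A \subset \mathbb{G}$, which is exactly $\epsilon$-edge differential privacy. There is essentially no hard step: the only two things to be careful about are (i) using independence across slots to see that all but one factor in the density-ratio product cancels, so that edge-neighboring inputs only ever interact with a single channel, and (ii) passing from the pointwise density-ratio bound to the set bound, which is immediate since $\mathbb{G}$ is finite. (The analogue for node neighboring would instead leave up to $|F|$ active factors, which is why the per-coordinate probability has to be boosted to $p=\tfrac{2}{1+e^{\epsilon/|F|}}$ there; for the edge case a single factor suffices.)
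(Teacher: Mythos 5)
Your proof is correct and follows essentially the same route as the paper's: factor the output distribution over the independent edge slots, cancel every factor except the one where $G$ and $G'$ differ, bound that single factor by $e^{\epsilon}$, and sum over the output set. Your pointwise per-$H$ likelihood-ratio formulation is in fact a bit cleaner than the paper's case analysis on whether the differing edge lies in $H$, but the underlying argument is identical.
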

\begin{proof}
Let $G$ and $G'$ be two arbitrary graphs that only differ in one edge $e$, and let $G_{\epsilon}$ and $G'_{\epsilon}$ be the outcomes of $\epsilon$-DP on $G$ and $G'$ respectively. Let $S$ be a collection of graphs, and let $H$ be an arbitrary graph in $S$.First we consider the case that $e\in G$ and $e\in H$. Then, we consider the case that $e\notin G$ and $e\in H$. The same argument holds for the other two possibilities.

Let $P^{-e}(G_{\epsilon},H)$ be the probability that $G_{\epsilon} - \{e\} = H- \{e\}$, i.e., ignoring edge $e$ all of their other edges match. Similarly, let $P^{-e}(G'_{\epsilon},H)$ be the probability that $G'_{\epsilon} - \{e\} = H- \{e\}$. Note that $G$ and $G'$ only defer on edge $e$. Hence, we have $P^{-e}(G_{\epsilon},H)=P^{-e}(G'_{\epsilon},H)$. We have
\begin{align*}
    \textbf{Pr}(G_{\epsilon} \in S) &= \sum_{H\in S} P(G_{\epsilon} = H)\\ 
    &=\sum_{H\in S} P^{-e}(G_{\epsilon},H) \textbf{Pr}(e \in G_{\epsilon})\\
    &=\sum_{H\in S} P^{-e}(G_{\epsilon},H) \big(1-\frac{1}{e^{\epsilon}+1}\big)\\
    &=\big(\frac{e^{\epsilon}}{e^{\epsilon}+1}\big)\sum_{H\in S} P^{-e}(G_{\epsilon},H) .
\end{align*}
Similarly, we have
\begin{align*}
    \textbf{Pr}(G'_{\epsilon} \in S) &= \sum_{H\in S} P(G'_{\epsilon} = H)\\ 
    &=\sum_{H\in S} P^{-e}(G'_{\epsilon},H) \textbf{Pr}(e \in G'_{\epsilon})\\
    &=\sum_{H\in S} P^{-e}(G'_{\epsilon},H) \big(\frac{1}{e^{\epsilon}+1}\big)\\
    &= \big(\frac{1}{e^{\epsilon}+1}\big) \sum_{H\in S} P^{-e}(G'_{\epsilon},H)\\
    &= \big(\frac{1}{e^{\epsilon}+1}\big) \sum_{H\in S} P^{-e}(G_{\epsilon},H).
\end{align*}
Hence, we have $\textbf{Pr}(G_{\epsilon} \in S) = e^ {\epsilon} \textbf{Pr}(G'_{\epsilon} \in S)$ as desired.

Now we consider the case that $e\notin G$ and $e\in H$. Similar argument to the previous case implies $  \textbf{Pr}(G_{\epsilon} \in S) = \big(\frac{1}{e^{\epsilon}+1}\big)\sum_{H\in S} P^{-e}(G_{\epsilon},H)$ and $\textbf{Pr}(G'_{\epsilon} \in S) = \big(\frac{e^{\epsilon}}{e^{\epsilon}+1}\big) \sum_{H\in S} P^{-e}(G_{\epsilon},H)$. This implies that $e^ {\epsilon} \textbf{Pr}(G_{\epsilon} \in S) =  \textbf{Pr}(G'_{\epsilon} \in S)$. For $\epsilon\geq 1$ we have $e^{\epsilon}\geq 1$. Hence we have $ \textbf{Pr}(G_{\epsilon} \in S) \leq e^ {\epsilon} \textbf{Pr}(G'_{\epsilon} \in S)$ as desired.
\end{proof}

\subsection{Bound on utility of the randomized response method}\label{app:upperbound-eps-proof}
\begin{theorem}[Restatement of Theorem~\ref{prop:upperbound-eps}]
Let $G=(U \cup F, E) \in \mathbb{G}$ be a graph, $\delta > 0$. Let $\cM$ be a mechanism that generates a graph according to Algorithm~\ref{alg:randomized_response} with $p = \frac{2}{1 + e^\epsilon}$.
Let $Q = |U||F|$ and $C(\delta) = \frac{\log(2/\delta)}{2}$. If $\frac{p}{4}\geq \sqrt{\frac{C(\delta)}{Q}}$, then with probability at least $1 - \delta$:
$$J(\cM(G), G) \leq \frac{1 - \frac{1}{e^{\epsilon} + 1}}{1 + \frac{1}{e^{\epsilon} + 1}\frac{1 - \lambda}{\lambda}} + 2\sqrt{\frac{C(\delta)}{Q}},$$
where $\lambda := \lambda(G)$.
\end{theorem}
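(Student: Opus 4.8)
The plan is to identify the leading term as a ratio of expectations, show that the numerator and denominator of the Jaccard ratio each concentrate, and then control the ratio of the concentrated quantities with the elementary estimate of Lemma~\ref{lem:ratio}. First I would compute, for each cell $(u,f)$, the probability that it lands in $E'$ under Algorithm~\ref{alg:randomized_response} run with $p=\tfrac{2}{1+e^\epsilon}$: it is $1-\tfrac p2=\tfrac{e^\epsilon}{e^\epsilon+1}$ when $(u,f)\in E$ and $\tfrac p2=\tfrac1{e^\epsilon+1}$ otherwise. Summing over the $Q=|U||F|$ cells and writing $\lambda=|E|/Q$, this gives $\mathbb{E}[|E\cap E'|]=|E|(1-\tfrac p2)$ and $\mathbb{E}[|E\cup E'|]=|E|+(Q-|E|)\tfrac p2$; dividing numerator and denominator of their ratio by $|E|$ recovers exactly the leading term of the statement. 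I would also record the two facts that drive the rest: writing $a:=\mathbb{E}[|E\cap E'|]/Q$ and $b:=\mathbb{E}[|E\cup E'|]/Q$, one has $b-a=\tfrac p2$, and $b\ge\tfrac p2$ always.

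Next I would set up the concentration. The quantity $|E\cap E'|$ is a sum of $|E|\le Q$ independent Bernoulli indicators, and $|E\oplus E'|$ is a sum of $Q$ independent Bernoulli indicators — in fact $|E\oplus E'|$ is distributed exactly as $\mathrm{Bin}(Q,\tfrac p2)$ no matter what $E$ is, since each cell is flipped to the wrong value with probability $\tfrac p2$. Hoeffding's inequality then gives that each of these two quantities deviates from its mean by at most $\sqrt{Q\,C(\delta)}=Q\sqrt{C(\delta)/Q}$ with probability at least $1-\tfrac\delta2$, so by a union bound both hold simultaneously with probability at least $1-\delta$. On this event, using the identity $|E\cup E'|=|E\cap E'|+|E\oplus E'|$ and the fact that $u\mapsto\tfrac{u}{u+v}$ is increasing in $u$ and decreasing in $v$, I would bound
\[
 J(\cM(G),G)=\frac{|E\cap E'|}{|E\cap E'|+|E\oplus E'|}\ \le\ \frac{aQ+\sqrt{Q\,C(\delta)}}{\bigl(aQ+\sqrt{Q\,C(\delta)}\bigr)+\bigl(\tfrac p2 Q-\sqrt{Q\,C(\delta)}\bigr)}\ =\ \frac{a+\sqrt{C(\delta)/Q}}{b},
\]
where the two error terms in the denominator cancel and $a+\tfrac p2=b$. (If one instead uses the sharper Hoeffding bound $|E\cap E'|\le (a+\sqrt{\lambda C(\delta)/Q})Q$ coming from the fact that this sum has only $|E|=\lambda Q$ terms, the denominator becomes $b-(1-\sqrt\lambda)\sqrt{C(\delta)/Q}$; this sharper form is what one needs to get a good bound when $\lambda$ is small.)

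The last step is to pass from $\tfrac{a+\sqrt{C(\delta)/Q}}{b}$ (or its sharper variant) to $\tfrac ab+2\sqrt{C(\delta)/Q}$. This is precisely the estimate isolated in Lemma~\ref{lem:ratio}: its hypothesis $x<\tfrac{b-a}{2}$ is exactly the assumption $\sqrt{C(\delta)/Q}\le\tfrac p4$ of the theorem, since $b-a=\tfrac p2$, and that same assumption together with $b\ge\tfrac p2$ keeps the denominator bounded below by $\tfrac p4>0$ so there is no division-by-zero concern. I expect this conversion to be the only delicate point of the argument: one must keep the $(1-\sqrt\lambda)$ improvement in the denominator rather than crudely replacing $\sqrt{\lambda C(\delta)/Q}$ by $\sqrt{C(\delta)/Q}$ everywhere, but modulo that bookkeeping the computation is routine. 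Adding back the concentration error then yields the stated bound with probability at least $1-\delta$.
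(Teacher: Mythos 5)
Your setup is the same as the paper's: the per-cell retention/addition probabilities $1-\tfrac p2$ and $\tfrac p2$, the resulting expectations whose ratio is the leading term, Hoeffding plus a union bound, and then a perturbation estimate for the ratio. Your one genuine variation is to concentrate $|E\oplus E'|\sim\mathrm{Bin}(Q,\tfrac p2)$ rather than the added-edge count $\mathrm{Bin}(Q-|E|,\tfrac p2)$ as the paper does; combined with the identity $|E\cup E'|=|E\cap E'|+|E\oplus E'|$ this makes the two error terms cancel in the denominator and yields the cleaner intermediate bound $\tfrac{a+x}{b}$ (the paper instead lands on $\tfrac{a+x}{b-x}$, which is weaker). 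That is a nice touch.

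The gap is exactly at the step you flag as ``the only delicate point,'' and it does not close as you describe. You need $\tfrac{a+x}{b}\le\tfrac ab+2x$, i.e.\ $\tfrac xb\le 2x$, i.e.\ $b\ge\tfrac12$ --- and $b=\lambda+\tfrac p2(1-\lambda)$ is far below $\tfrac12$ for exactly the sparse regime the theorem targets. The hypothesis $x\le\tfrac{b-a}{2}=\tfrac p4$ is not the right condition: with $\lambda=10^{-4}$, $p=0.1$, $x=0.01$ one gets $\tfrac{a+x}{b}\approx 0.20$ versus $\tfrac ab+2x\approx 0.022$. Your sharper variant $\tfrac{a+\sqrt\lambda\,x}{\,b-(1-\sqrt\lambda)x\,}$ also fails in general: take $\lambda=0.25$, $p=0.04$, $x=p/4=0.01$; then the variant equals $\tfrac{0.25}{0.26}\approx 0.962$ while $\tfrac ab+2x\approx 0.944$. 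Moreover Lemma~\ref{lem:ratio} as stated bounds $\tfrac{a-x}{b-x}$, which is not the quantity that arises, so it cannot be invoked here. To be fair, the paper's own proof commits the identical error (it applies the same lemma to $\tfrac{a+x}{b-x}$, for which the condition $x<\tfrac{b-a}{2}$ is likewise insufficient), so you have faithfully reconstructed the argument including its soft spot. To actually close it one must exploit that the summands are Bernoulli with small success probabilities --- e.g.\ a Bernstein/Chernoff bound gives deviations of order $\sqrt{\lambda p\,Q\log(1/\delta)}$ for $|E\cap E'|$, and since $b\ge\max(\lambda,\tfrac p2)$ one then gets $\tfrac{\sqrt{\lambda p}}{b}\le\sqrt2$, recovering an additive error of order $\sqrt{C(\delta)/Q}$; plain Hoeffding is too crude for the final constant.
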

\begin{proof}
Let $E'$ denote the edge set of the output graph. By definition of the mechanism, each edge on $E$ is preserved with probability $1 - p + \frac{1}{2} p = 1-\frac{p}{2}$. Thus, we have $|E \cap E'| = \sum_{i=1}^{|E|} Y_i$ where $Y_i$ is distributed Bin($1 - \frac{p}{2}$). By Hoeffding's inequality we know with probability at least $1 - \frac{\delta}{2}$:
\begin{align*}
    |E\cap E'| &\leq \left(1 - \frac{p}{2}\right)|E| + \sqrt{|E|\frac{\log(2/\delta)}{2}} \\
    &= \left(1 - \frac{p}{2}\right) \lambda Q +\sqrt{\lambda Q C(\delta)}
\end{align*}
Similarly an edge not in $E$ is added to $E'$ with probability $\frac{p}{2}$. Thus we have $|E \cup E'| =  \sum_{j=1}^{Q - |E|} Z_j$ where $Z_j\sim \textrm{Bin}\left(\frac{p}{2}\right)$. Thus, by Hoeffding's inequality we have that with probability at least $1 - \frac{\delta}{2}$
\begin{align*}
    |E\cup E'| &\geq |E| + \left(\frac{p}{2}\right)(Q - |E|) - \sqrt{(Q - |E|) \frac{\log(2/\delta)}{2}} \\
    &=\lambda Q + \left(\frac{p}{2}\right)Q( 1 - \lambda)  - \sqrt{Q(1 - \lambda)C(\delta)} 
\end{align*}
By the union bound, and rearranging terms we thus have that with probability at least $1 - \delta$
\begin{align*}
J(\cM(G), G) &\leq\frac{(1 - \frac{p}{2})\lambda + \sqrt{ \lambda \frac{C(\delta)}{Q}}}{\lambda + \frac{p}{2}(1 - \lambda) -\sqrt{ (1 - \lambda)\frac{C(\delta)}{Q}}} \\
&=\frac{(1 - \frac{p}{2})\lambda + \sqrt{ \frac{C(\delta)}{Q}}}{\lambda + \frac{p}{2}(1 - \lambda) -\sqrt{\frac{C(\delta)}{Q}}} \\
& \leq \frac{(1 - \frac{p}{2})\lambda}{\lambda + \frac{p}{2}(1 - \lambda)} +  2\sqrt{ \frac{C(\delta)}{Q}},
\end{align*}
where we have used Lemma~\ref{lem:ratio} in the Appendix for the last inequality.  The results follow by dividing the numerator and denominator by $\lambda$ and using the definition of $p$.
\end{proof}

\subsection{Utility of k-anonymity by suppression in the stochastic block model}
\label{app:thm:sbm}
\begin{theorem}[Restatement of Theorem~\ref{thm:sbm}]
Let $G$ be a graph generated by the stochastic block model with parameters $r,s,\alpha,\beta$. Let $k \geq \frac{2\log n}{\log  1 / q} $. With probability $99\%$, any $k$-anonymous subgraph of $G$ contains at most $\frac{2\log n+10}{\log  1 / q} n  \in \tilde{O}(n)$ edges. 
\end{theorem}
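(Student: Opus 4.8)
The plan is to bound the number of edges in any $k$-anonymous subgraph $H \subseteq G$ by analyzing the structure of equivalence classes that $k$-anonymity forces. In a $k$-anonymous subgraph, the users partition into equivalence classes, each of size at least $k$, where all users in a class have exactly the same neighborhood in $H$. The key observation is that if a class $C$ has common neighborhood $N \subseteq F$, then every edge of $H$ incident to a user in $C$ goes to $N$, so the total number of edges contributed by $C$ is $|C| \cdot |N|$. Since $H \subseteq G$, the set $N$ must be contained in the common neighborhood (in $G$) of all users in $C$; that is, $N \subseteq \bigcap_{u \in C} N_G(u)$. So it suffices to show that with probability $99\%$, every set of $k$ users in $G$ has a common neighborhood of size at most roughly $\frac{2\log n + 10}{\log 1/q}$, and then sum $|C|\cdot|N| \le |C| \cdot \frac{2\log n+10}{\log 1/q}$ over all classes, using that the class sizes sum to $n$.

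The core estimate is a union bound over choices of a size-$k$ set $S$ of users and a candidate common-neighbor feature $f$. First I would fix $S$ and a feature $f$: the probability that $f$ is adjacent (in $G$) to all $k$ users in $S$ is at most $q^{k}$ when $f$ lies in the block corresponding to a subset of $S$, and is much smaller ($\le p^{|S'|} q^{|S''|}$ type bounds, or simply $\le q^{k}$ since $p \le q$) otherwise — the worst case is $q^{k}$, achieved when all of $S$ lies in one block matched to $f$'s block. Actually, more carefully, for an arbitrary feature $f$ the probability it is connected to all $k$ chosen users is at most $q^{k}$ (each of the $k$ independent edges is present with probability $\le q$). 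Then the expected number of features shared by all users in $S$ is at most $n \cdot q^{k}$. Taking $k \ge \frac{2\log n}{\log 1/q}$ gives $q^{k} \le n^{-2}$, so this expectation is at most $1/n$; a Chernoff/Markov-type tail bound shows that with high probability no size-$k$ set has more than $\frac{2\log n + 10}{\log 1/q}$ common neighbors. One must be slightly careful to union bound over all $\binom{n}{k}$ sets $S$, but since we only need the statement for the (at most $n/k$) equivalence classes that actually arise, and more simply since $q^{k}$ is polynomially small with a tunable exponent, choosing the constant $10$ (equivalently an extra factor $q^{-5}$ worth of slack, i.e.\ $n^{5}$ in the exponent count against $\binom{n}{k} \le 2^{n}$... ) — here the cleanest route is to bound the probability that a \emph{fixed} set of $k$ users has more than $t := \frac{2\log n+10}{\log 1/q}$ common neighbors by $\binom{n}{t}(q^{k})^{t} \le n^{t} \cdot n^{-2t} = n^{-t}$, which is $\le n^{-10/\log(1/q)}$, small enough, and then union bound only over the relevant classes (or absorb the $2^{n}$ from all sets by increasing the constant appropriately — the statement's "$99\%$" and "$\tilde O(n)$" give room).

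Putting it together: condition on the high-probability event that every equivalence class $C$ has $|\bigcap_{u\in C} N_G(u)| \le t = \frac{2\log n+10}{\log 1/q}$. Then for each class $C_j$ with common neighborhood $N_j$ in $H$, we have $|N_j| \le t$ and the number of $H$-edges incident to $C_j$ is exactly $|C_j| |N_j| \le t |C_j|$. Summing over the classes, $|E(H)| = \sum_j |C_j||N_j| \le t \sum_j |C_j| = t n = \frac{2\log n+10}{\log 1/q} n$, which is the claimed bound, and this is $\tilde O(n)$ since $q$ is a constant bounded away from $1$ (or more generally $\log 1/q = \Theta(1)$ in the regime considered). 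The main obstacle is getting the union bound clean: naively union-bounding over all $\binom{n}{k}$ user-subsets costs a factor $2^{n}$ that a $q^{k} = n^{-2}$ bound cannot absorb, so one either restricts attention to the at most $n/k$ classes that genuinely appear (they are determined by $H$, not chosen adversarially after seeing the randomness — this requires a small argument that the classes are a deterministic function of the chosen $k$-anonymous subgraph, hence we can union bound over the polynomially-many candidate common-neighborhoods rather than over subsets) or sharpens the per-set failure probability to beat $2^{n}$, which needs $k$ slightly larger than the stated threshold; reconciling this with the exact constant "$2\log n + 10$" in the statement is the delicate part.
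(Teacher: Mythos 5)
Your approach is the same as the paper's: reduce $k$-anonymity of a subgraph $H\subseteq G$ to the statement that $G$ contains no complete bipartite subgraph $K_{k,t}$ with $t=\frac{2\log n+10}{\log 1/q}$ (every class of $\geq k$ users sharing a neighborhood of size $\geq t$ in $H$ would exhibit such a subgraph in $G$), and then union bound. The part you flag as "the delicate part" is in fact not delicate, and your two proposed escape hatches are respectively unnecessary and unsound. The straightforward union bound over \emph{pairs} (a $k$-set of users, a $t$-set of features) closes directly: the failure probability is at most
\begin{equation*}
\binom{n}{k}\binom{n}{t}\,q^{kt}\;\leq\; n^{k}\,n^{t}\,q^{kt}\;=\;\exp\Bigl(k\log n+t\log n-kt\log\tfrac1q\Bigr),
\end{equation*}
and splitting $kt\log\frac1q$ into two halves, the half $\frac{kt}{2}\log\frac1q = k\bigl(\log n+5\bigr)$ (using the exact value of $t$) cancels $k\log n$ with $e^{-5k}$ to spare, while the half $\frac{kt}{2}\log\frac1q\geq t\log n$ (using $k\geq\frac{2\log n}{\log 1/q}$) cancels $t\log n$; the total is $\leq e^{-5k}\leq e^{-5}<0.01$. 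Your difficulty arose from two premature weakenings: bounding $\binom{n}{k}$ by $2^{n}$ rather than $n^{k}$, and replacing $q^{k}$ by $n^{-2}$ before multiplying by the number of $k$-sets, which throws away exactly the extra decay (for $k$ above the threshold, $q^{k}$ is much smaller than $n^{-2}$) needed to absorb $n^{k}$. Keep the full product $q^{kt}$ until the end and the constants work out.

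Separately, the fallback you suggest --- union bounding only over the at most $n/k$ equivalence classes that "genuinely appear" --- does not work as stated: the theorem quantifies over \emph{every} $k$-anonymous subgraph $H$ of $G$, so the classes are selected adversarially after the randomness of $G$ is revealed, and you cannot treat them as a fixed polynomial-size family. You must union bound over all candidate $k$-subsets of users (or, as above, over all $(k,t)$-pairs of subsets), which is exactly what the $n^{k}n^{t}q^{kt}$ computation accomplishes. The final step of your argument --- summing $|C_j|\cdot|N_j|\leq t\,|C_j|$ over classes to get $tn$ edges --- is equivalent to the paper's observation that no user has degree exceeding $t$ in $H$, and is fine.
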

\begin{proof}
Let $t=\frac{2\log n+10}{\log \frac {1} {q}}$. 
Fix a subset of size $k$ of vertices in the first part and a subset of size $t$ in the second part. Recall that the probability that we have an edge between a pair of vertices is either $p$ or $q$, where $p\leq q$, and each edge exists independently. Hence, the probability that all of the edges between a certain set of vertices of size $k$ and a certain set of vertices of size $t$ exist is at most $q^{kt}$. On the other hand there are ${k \choose n} \cdot {t \choose n}$ such a pair of sets. Therefore, by union bound, the probability that there exists a set of vertices of size $k$ and a set of vertices of size $t$ in $G$ that forms a complete bipartite graph is at most 
\begin{align*}
    q^{kt} \cdot {k \choose n} \cdot {t \choose n} &\leq q^{kt} n^k n^t \\
    &= \exp \big(kt\log q + k \log n + t \log n\big)\\
    &= \exp \big(-kt\log \frac 1 q + k \log n + t \log n\big)\\
    &= \exp \big(  (k \log n -k\frac t 2 \log \frac 1 q ) 
    + (t \log n-\frac k 2 t\log \frac 1 q)\big)\\
    &\leq \exp \big(  (k \log n -k\frac t 2 \log \frac 1 q )\big)\\
    &= \exp  ( 5 k  )\\
    &\leq 0.01.
\end{align*}
Where the inequalities follow by the bounds on $k$ and $t$. 
This implies that with probability $99\%$ there is no vertex with degree more than $t$ in a $k$-anonymous subgraph of $G$. Hence the total number of edges in such a subgraph is at most $t n  = \frac{2\log n+10}{\log \frac 1 q} n  \in \tilde{O}(n)$.
\end{proof}

\section{Omitted proofs from Section~\ref{sec:technical}}
\label{app:sec:technical}

\subsection{Lemma~\ref{lm:Jaccard2Count}}\label{app:lm:Jaccard2Count}
\begin{lemma}[Restatement of Lemma~\ref{lm:Jaccard2Count}]
Assume $J(E,E_{\opt}) \geq 1-\phi$. We have $\sum_{u} \Delta_u^{\opt} \leq \frac{2\phi}{1-\phi} |E|$.
\end{lemma}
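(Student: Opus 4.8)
The plan is to reduce the statement to a short inequality about edge sets. First I would observe that $\sum_u \Delta_u^{\opt}$ is nothing but the Hamming distance between the input matrix and the optimal one: in the bipartite encoding every matrix entry $(u,f)$ is incident to a unique user $u$, so the sets of entries changed within different users' rows are pairwise disjoint and together form $E \oplus E_{\opt}$. Hence $\sum_u \Delta_u^{\opt} = |E \oplus E_{\opt}| = |E \cup E_{\opt}| - |E \cap E_{\opt}|$. This identification is the only place where I have to be slightly careful to avoid double counting, but it is immediate since the per-user row changes partition the set of changed entries.

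Next, writing $a := |E \cap E_{\opt}|$ and $b := |E \cup E_{\opt}|$, the hypothesis $J(E, E_{\opt}) = a/b \ge 1-\phi$ rearranges to $b - a \le \phi\, b$, i.e. $\sum_u \Delta_u^{\opt} \le \phi\, b$. It then remains to bound $b$ in terms of $|E|$. Since $b = |E| + |E_{\opt}\setminus E|$ and $|E_{\opt}\setminus E| \le |E \oplus E_{\opt}| = b - a \le \phi\, b$, we get $b \le |E| + \phi\, b$, hence $b \le |E|/(1-\phi)$. Combining the two bounds gives $\sum_u \Delta_u^{\opt} \le \phi\, b \le \phi |E|/(1-\phi) \le \frac{2\phi}{1-\phi}|E|$, which is the claim; the factor $2$ simply leaves slack and is harmless.

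In short, the whole argument is: translate $\sum_u \Delta_u^{\opt}$ into $|E\oplus E_{\opt}|$, use the definition of Jaccard similarity to bound it by $\phi\,|E\cup E_{\opt}|$, and use the same bound once more to control $|E\cup E_{\opt}|$ by $|E|$. I do not anticipate any genuine difficulty: no geometry, probability, or the clustering machinery is needed here, only elementary manipulation of the set identities $|A\cup B| = |A| + |B\setminus A|$ and $|A\oplus B| = |A\cup B| - |A\cap B|$ together with the hypothesis on $J$.
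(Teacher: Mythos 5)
Your proof is correct and follows essentially the same route as the paper: translate $\sum_u \Delta_u^{\opt}$ into the symmetric difference, bound $|E\oplus E_{\opt}|$ by $\phi\,|E\cup E_{\opt}|$ via the Jaccard hypothesis, then bound $|E\cup E_{\opt}|$ by $|E|/(1-\phi)$. The one divergence is that the paper asserts $\sum_u \Delta_u^{\opt} = 2|E\oplus E_{\opt}|$ (which is where its factor of $2$ originates), whereas your identity $\sum_u \Delta_u^{\opt} = |E\oplus E_{\opt}|$ is the right one for the bipartite encoding (each changed entry lies in exactly one user's row), so you actually establish the stronger bound $\frac{\phi}{1-\phi}|E|$, from which the stated bound follows with slack exactly as you note.
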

\begin{proof}
Since $J(E,E_{\opt}) \geq 1-\phi$, we have $\frac{|E\cap E_{\opt}|}{|E\cup E_{\opt}|}\geq 1-\phi$. Thus $\frac{|E \oplus E_{\opt}|}{|E\cup E_{\opt}|}\leq \phi$, which gives us $|E \oplus E_{\opt}| \leq \phi |E\cup E_{\opt}|$. On the other hand $\frac{|E\cap E_{\opt}|}{|E\cup E_{\opt}|}\geq 1-\phi$ implies that $|E\cup E_{\opt}| \leq \frac{1}{1-\phi} |E\cap E_{\opt}| \leq \frac{1}{1-\phi}|E|$. Therefore we have
\begin{align*}
    \sum_{u} \Delta_u^{\opt} = 
    2 |E \oplus E_{\opt}| 
    \leq 2\phi |E\cup E_{\opt}| 
    \leq \frac{2\phi}{1-\phi} |E|.
\end{align*}
\end{proof}

\subsection{Lemma~\ref{lm:Count2Jaccard}}\label{app:lm:Count2Jaccard}
\begin{lemma}[Restatement of Lemma~\ref{lm:Count2Jaccard}]
Assume $\sum_{u} \Delta_u^{\alg} \leq \phi' |E|$. We have $J(E,E_{\alg}) \geq 1-\frac{\phi'}{2}$.
\end{lemma}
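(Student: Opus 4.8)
The plan is to run the implication of Lemma~\ref{lm:Jaccard2Count} in reverse: there one passes from a Jaccard guarantee to a bound on the total number of edits, and here I want to recover a Jaccard guarantee from such a bound. The first step is to rewrite the similarity so that the edit count appears explicitly. Since $|E \cup E_{\alg}| = |E \cap E_{\alg}| + |E \oplus E_{\alg}|$, we have
\[
J(E, E_{\alg}) \;=\; \frac{|E \cap E_{\alg}|}{|E \cup E_{\alg}|} \;=\; 1 - \frac{|E \oplus E_{\alg}|}{|E \cup E_{\alg}|},
\]
so it is enough to prove $\frac{|E \oplus E_{\alg}|}{|E \cup E_{\alg}|} \le \frac{\phi'}{2}$.

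Next I would bound the numerator and the denominator separately. For the numerator, I would invoke the same edge-counting identity used in the proof of Lemma~\ref{lm:Jaccard2Count}, namely $\sum_{u} \Delta_u^{\alg} = 2\,|E \oplus E_{\alg}|$; together with the hypothesis $\sum_{u} \Delta_u^{\alg} \le \phi'|E|$ this gives $|E \oplus E_{\alg}| \le \tfrac{\phi'}{2}|E|$. For the denominator, $E \subseteq E \cup E_{\alg}$ yields the trivial but sufficient bound $|E \cup E_{\alg}| \ge |E|$. Substituting both into the identity above,
\[
1 - J(E, E_{\alg}) \;=\; \frac{|E \oplus E_{\alg}|}{|E \cup E_{\alg}|} \;\le\; \frac{\tfrac{\phi'}{2}|E|}{|E|} \;=\; \frac{\phi'}{2},
\]
which is the claimed inequality. (The degenerate case $|E| = 0$ forces $\Delta_u^{\alg} = 0$ for all $u$, hence $E_{\alg} = E$, and is handled by convention.)

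I do not expect a genuine obstacle; this is a short bookkeeping lemma, the converse companion of Lemma~\ref{lm:Jaccard2Count}. The single point worth a sentence of care is that one must \emph{lower}-bound the denominator $|E \cup E_{\alg}|$ by $|E|$ rather than equate the two: this is legitimate because $|E \cup E_{\alg}|$ appears under a nonnegative quantity that is subtracted from $1$, so replacing it by a smaller value can only weaken (enlarge) the resulting upper bound on $1 - J(E, E_{\alg})$, which is exactly the direction needed. Everything else is the displayed identity plus the hypothesis.
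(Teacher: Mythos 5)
Your proposal is correct and is essentially the paper's own argument: both use the identity $\sum_u \Delta_u^{\alg} = 2\,|E \oplus E_{\alg}|$ together with the bound $|E| \le |E \cup E_{\alg}|$ and then rearrange. The only cosmetic difference is that you write the Jaccard similarity as $1 - |E\oplus E_{\alg}|/|E\cup E_{\alg}|$ up front, whereas the paper substitutes $|E\oplus E_{\alg}| = |E\cup E_{\alg}| - |E\cap E_{\alg}|$ and rearranges at the end.
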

\begin{proof}
 Note that $\sum_{u} \Delta_u^{\alg} \leq \phi' |E|$ means $ 2 |E \oplus E_{\opt}| \leq \phi' |E|$. Hence we have 
 \begin{align*}
2\big(|E\cup E_{\alg}| - |E\cap E_{\alg}|\big)\leq \phi' |E| \leq \phi' |E\cup E_{\alg}|.
 \end{align*}
By some rearrangement we have 
$\frac{|E\cap E_{\alg}|}{|E\cup E_{\alg}|} \geq 1-\frac{\phi'}{2}$,
 which means $J(E,E_{\alg}) \geq 1-\frac{\phi'}{2}$ as desired.
\end{proof}

\section{Additional material from the experimental section}
\label{app:exp}

\paragraph{Description of the datasets}
We now report a more detailed description of the datasets used. Some key properties of the datasets are shown in Table~\ref{tab:datasets}.

\begin{table}[ht] 
\small
\centering 
\caption{Datasets used in the analysis} \label{tab:datasets}
\begin{tabular}{l|llll}
\hline
Name & Rows  & Columns & Entries & Density \\
\hline
stochastic & $1024$ & $1024$ & $\sim 62{,}000$ & $\sim 0.06$\\
adult & $32{,}561$ & $102$ & $260{,}488$ & $0.078$\\
playstore  & $8{,}672$ & $521$ & $86{,}720$& $0.019$\\
dblp  & $317{,}080$ & $317{,}080$ & $2{,}099{,}732$ & $2.09 \times 10^{-5}$\\
stanford & $281{,}731$ & $261{,}588$ & $2{,}312{,}497$ & $3.14 \times 10^{-5}$\\
user-lists & $>1B$  & $>100K$ & $>10B$ & $<10^{-5}$\\
\hline
\end{tabular}
\end{table}

{\bf stochastic} is generated from the stochastic block model described in Section~\ref{sec:stochastic} with $n = 1024$ rows and columns, and parameters $s=64$, $q = 0.8$, $p = 0.01$.
{\bf adult} is a dataset from UCI~\footnote{\url{https://archive.ics.uci.edu/ml/datasets/adult}} from the US census used in many anonymization and differential privacy papers~\cite{lefevre2006mondrian}. We use all categorical features represented as a binary vector.
{\bf playstore}\footnote{\url{https://www.kaggle.com/lava18/google-play-store-apps}} consists of a bipartite graph representing features of user's devices in Google Playstore. The items $F$ represent different features that the device has. 
{\bf dblp}~\cite{yang2015defining} consists of a co-authorship graph. In graph adjacency form, rows and columns represent authors in computer science and an entry $(i,j)$ is 1 iff $i$ and $j$ co-authored a paper.
{\bf stanford}~\cite{leskovec2009community} consists of a directed web graph, where rows and columns represent web pages crawled from the Stanford website. An entry $(i,j)$ is $1$ iff page $i$ has a hyperlink to page $j$.
{\bf user-lists} is a proprietary dataset from a major internet company containing user-interest relationships. Each row represents a user, and the columns represent the inclusion in a given marketing interest list.

\paragraph{Metrics}
Our utility goal is to preserve as accurately as possible the input, to do so we use the {\it Jaccard similarity} between the set of the edges as utility metric as defined in the paper. To measure the distortion incurred by the data we also use other standard anonymity quality metrics, including the number of {\it suppressed entries} and {\it created entries}. 
We note that for binary data other metrics such as the {\it normalized certainty penalty}~\cite{lefevre2006mondrian} used for hierarchical features are not relevant.

\paragraph{Experimental infrastructure}
We implemented our algorithm using C++. For all execution of an experiment (except for the large-scale user-list dataset) we use a single-thread program run on standard commodity hardware. All experiments terminated within $3$ hours. We repeat each run of an algorithm with a given setting $10$ times and report mean and deviation metrics. 
For our user-list dataset with more than a billion rows, we evaluated a simple heuristic to parallelize the computation of our algorithm. We split the dataset in chunks of order of $10^5$ rows by using a Locality Sentitive Hashing technique (LSH)~\cite{wang2014hashing}. More precisely, for each row we compute $8$ independent min hash LSHs (over the set of columns with $1$) and sort the rows of the dataset by the lexicographical order of their min hashes. This is similar to the shingle sorting technique~\cite{chierichetti2009compressing}. We then obtain the chunks by approximately dividing this sequence of elements in consecutive intervals of size $10^5$. Then we ran our algorithm independently on each chunk of the dataset and combined the solution. Our results (reported in Fig~\ref{fig:result-jaccard-user_list}) show that this simple heuristic is quite effective.

\paragraph{Baselines}
We compare our algorithm with the well-known {\it mondrian} anonymization algorithm~\cite{lefevre2006mondrian}, using an open source implementation\footnote{\url{https://github.com/qiyuangong/Mondrian}} which enforces k-anonymity by suppression. The algorithm doesn't have parameters to tune (expect $k$).

Our algorithm has theoretical guarantees for {\it smooth} k-anonymization, but it can be easily used as an heuristic for (standard) k-anonymity by suppression (in the last step, the clusters are simply anonymized by computing the intersection of all binary vectors), so we report results for {\it non-smooth} k-anonymity using our algorithm as an additional baseline.

\paragraph{Implementation details}
For efficiency we made small changes to the implementation of the algorithm in Section~\ref{alg:improved} that we now summarize.
In step (2) we slightly modify the formula for obtaining the best empirical results. Similarly to the bicriteria algorithm of~\cite{svitkina2010lower}, we add a parameter  $\beta > 1$, set $\alpha = 1/\beta$, and define for a point $u_i$ the cost of the facility as $\frac{2\alpha}{1-\alpha} \sum_{u'\in U^{\beta k}_i} \dist(u',u_i)$, where $U^{\beta k}_i$ is the set of $\beta k$ closest point to $i$. In our experiments, $\beta=2$ (i.e. $\alpha = 1/2$) was the best performing setting and we report experiments with that value only (results with $2\le  \beta \le 8$ are very similar).
In step 3, we solve the facility location problem efficiently by running the well-known Meyerson's algorithm~\cite{meyerson2001online} 10 different times on a random ordering of the data points and selecting the best solution (in this setting the algorithm is known to be a constant factor approximation~\cite{fotakis2011online} in expectation).
Finally, in steps 4 and 5, we replace the procedure described in the paper, with a simpler one where we simply close facilities that have fewer than $k$ clients in arbitrary order and reassign all clients to the nearest open one until all clusters are large enough.

\begin{table}
	\centering
\small
\begin{tabular}{llrrr}
\toprule
           &        &  Jac. &  S.E. &  C.E. \\
dataset & algorithm &          &                     &                  \\
\midrule
stochastic & mondrian &     2.2\% &               97.8\% &             0.0\% \\
           & non-smooth &    16.4\% &               83.6\% &             0.0\% \\
           & smooth &    68.1\% &               17.7\% &            21.0\% \\
adult & mondrian &    59.9\% &               40.1\% &             0.0\% \\
           & non-smooth &    64.8\% &               35.2\% &             0.0\% \\
           & smooth &    85.0\% &                8.9\% &             7.2\% \\
playstore & mondrian &    51.2\% &               48.8\% &             0.0\% \\
           & non-smooth &    39.2\% &               60.8\% &             0.0\% \\
           & smooth &    66.1\% &               26.2\% &            11.6\% \\
dblp & non-smooth &     4.3\% &               95.7\% &             0.0\% \\
           & smooth &     7.4\% &               92.5\% &             1.4\% \\
stanford & non-smooth &    48.0\% &               52.0\% &             0.0\% \\
           & smooth &    52.6\% &               46.1\% &             2.4\% \\

user\_lists & non-smooth &    67.3\% &               32.7\% &             0.0\% \\
           & smooth &    71.0\% &               27.7\% &             1.9\% \\
\bottomrule
\end{tabular}
\caption{Extended version of Table~\ref{tab:results-k8}. Average results for $k=8$ for various algorithms and dataset. Jac., S.E. and C.E. stand for, respectively, Jaccard similarity and fraction of suppressed entries and newly created entries (both normalized by the entries in the input dataset).\label{tab:results-k8-extended}}
\end{table}

\begin{figure*}
\centering
\subfigure[stochastic]{\includegraphics[width=0.4\textwidth,keepaspectratio]{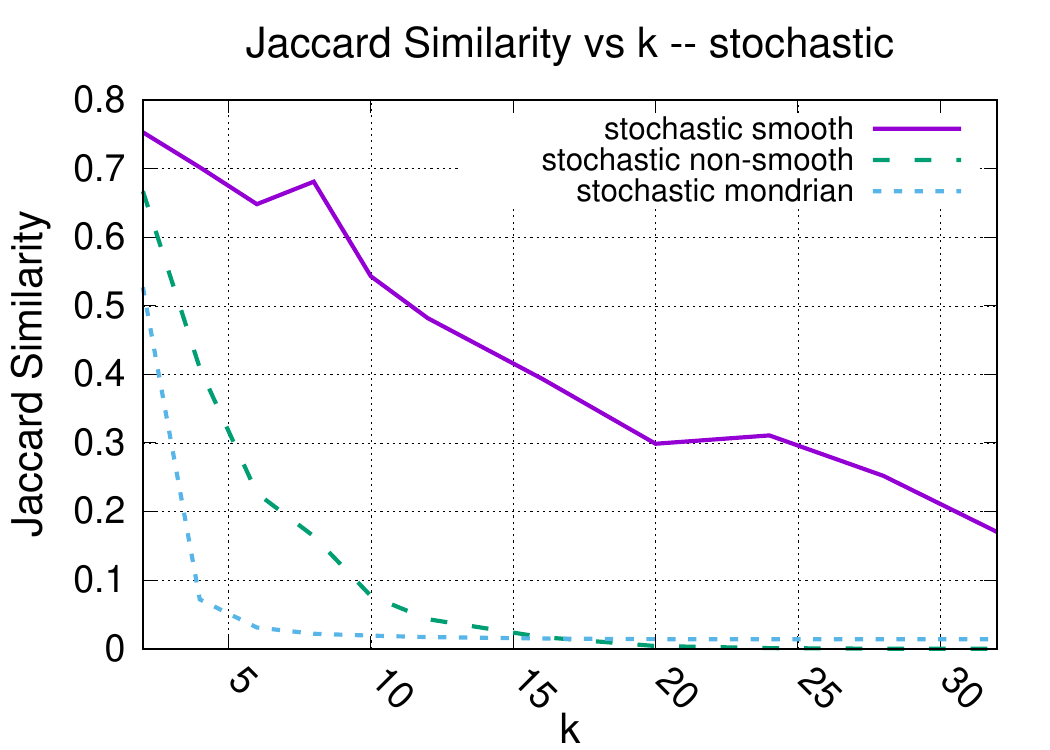}\label{fig:result-jaccard-stochastic-app}}
\subfigure[playstore]{\includegraphics[width=0.4\textwidth,keepaspectratio]{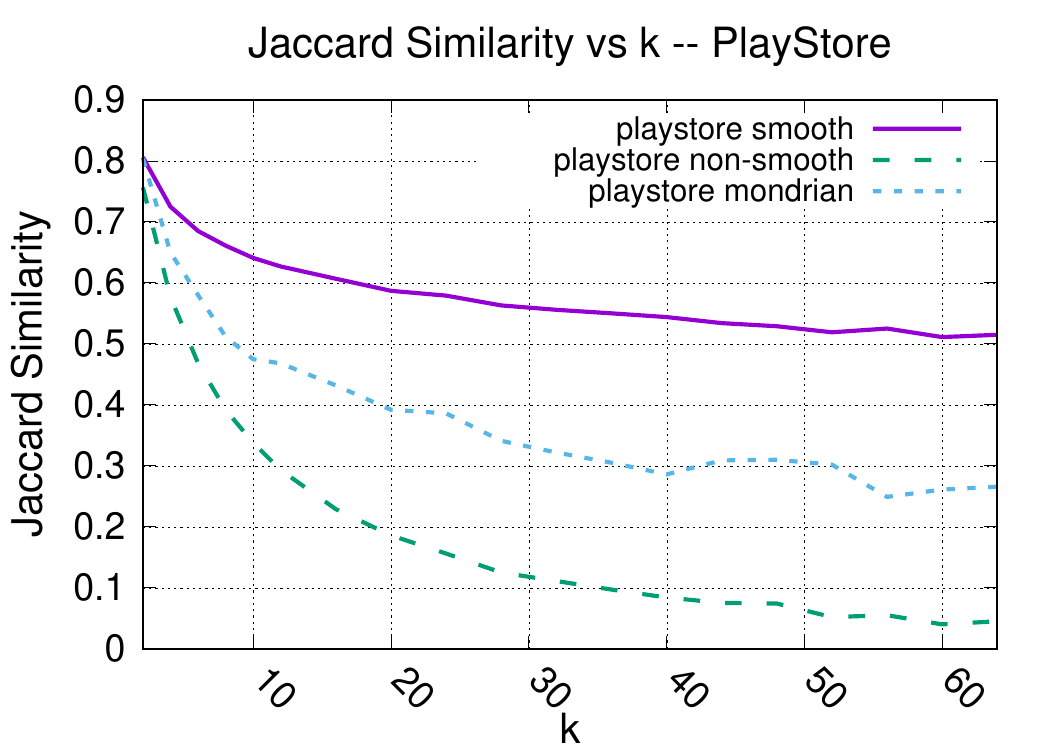}\label{fig:result-jaccard-playstore-app}}
\subfigure[dblp]{\includegraphics[width=0.4\textwidth,keepaspectratio]{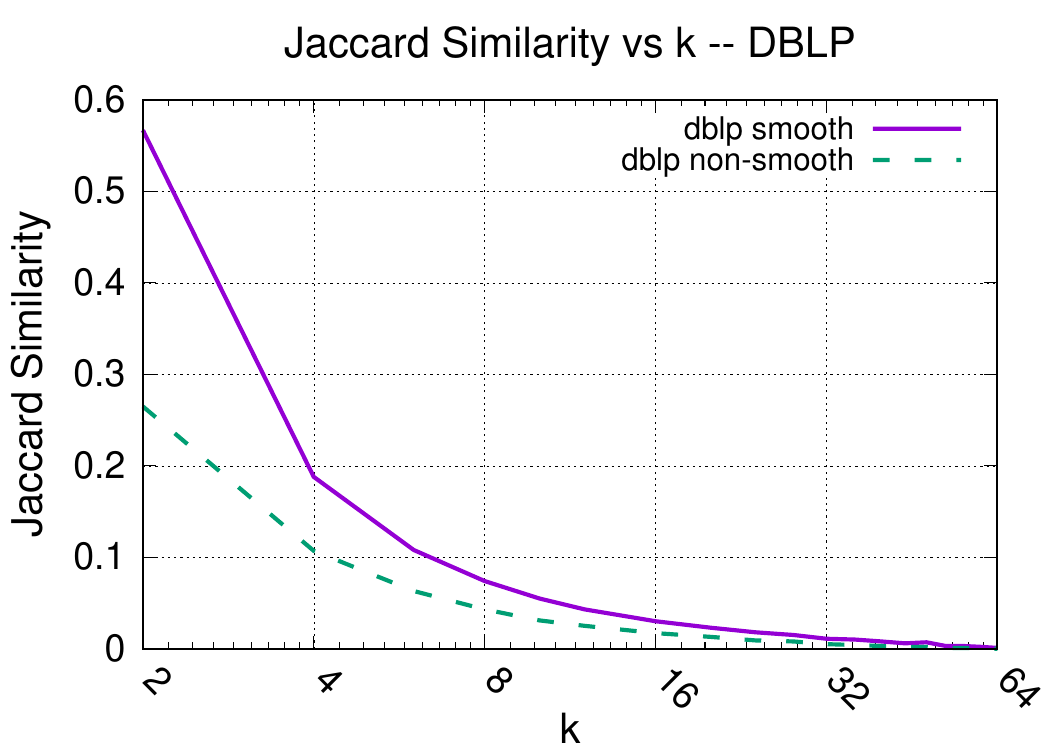}\label{fig:result-jaccard-dblp-app}}
\subfigure[stanford]{\includegraphics[width=0.4\textwidth,keepaspectratio]{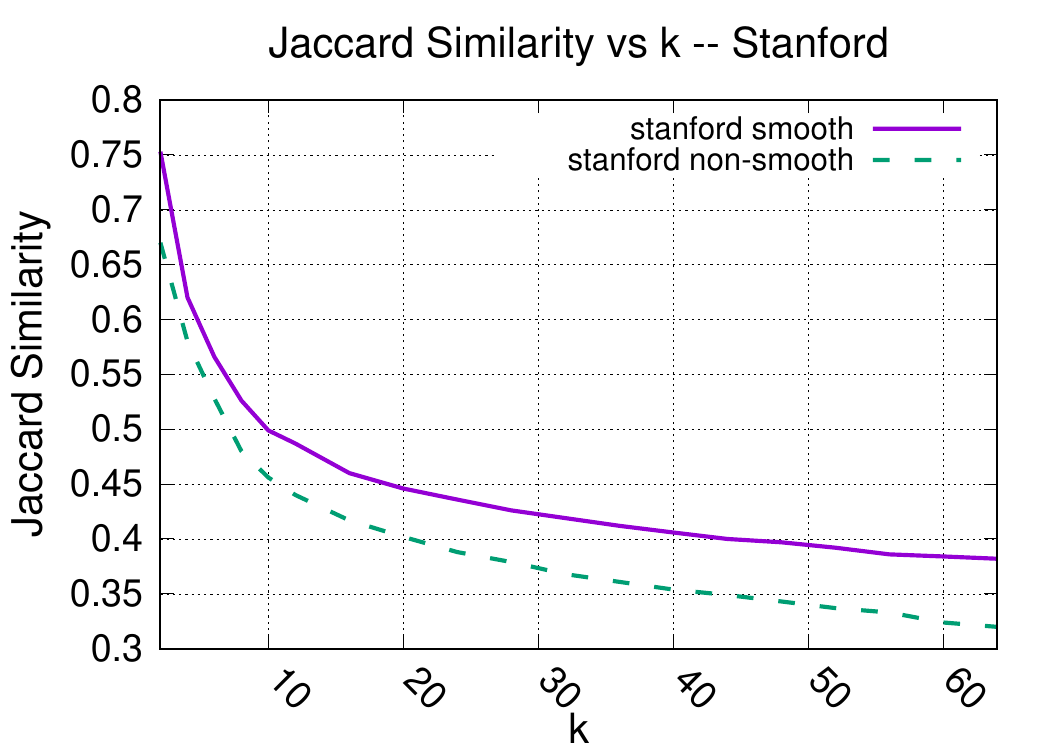}\label{fig:result-jaccard-stanford-app}}
\caption{Mean Jaccard similarity vs k for additional datasets. \label{fig:result-jaccard-app}}.
\end{figure*}

\paragraph{Jaccard similarity vs $k$}
First, we evaluate the quality of our algorithm for smooth $k$-anonymity for different $k$ values and we compare it with that of the (non-smooth) k-anonymity solution and {\it mondrian}. In Figures~\ref{fig:result-jaccard-stochastic-app},~\ref{fig:result-jaccard-playstore-app},~\ref{fig:result-jaccard-dblp-app}, ~\ref{fig:result-jaccard-stanford-app} as well as Figures~\ref{fig:result-jaccard-adult} and~\ref{fig:result-jaccard-user_list} we plot the mean Jaccard similarity for a given setting of the $k$ parameter for our datasets for both smooth k-anonymity (solid line), non-smooth anonymity (dashed line) and mondrian (dotted). We were not able to run the mondrian algorithm on the larger datasets (dblp, stanford, user-list) as this algorithm (contrary to ours) scales with the size of the full $n \times m$ matrix where $m$ is the number of columns of the dataset (while our algorithm can exploit the sparsity of the matrix).

From the pictures, it is possible to observe that, as expected, the Jaccard similarity decreases with increasing $k$, but at every $k$ level smooth k-anonymity allows to obtain significantly better results than all baselines. For instance, for playstore we observe $2\times$ higher Jaccard similarity for $k=32$ than with the best baseline. Among the baselines, interestingly we see that non-smooth anonymity despite using our algorithm which is not optimized for anonymity by suppression can perform better than mondrian which is optimize for that task, in particular for lower k values. For large k values in Figure~\ref{fig:result-jaccard-adult} mondrian performs better but still is outperformed by smooth anonymity.

The level of Jaccard similarity at a given $k$ is, as expected, dependent on the structure of the graph. We observe for instance higher similarity at large $k$ values for stochastic, adult, playstore dataset than dblp. This can be easily explained by the nature of the datasets: for adult, stochastic, playstore, the rows represent comparably dense features; while for dblp each row represent a quite sparse and somewhat unique feature set (i.e., the co-authors of a researcher). Interestingly, the  algorithm performs much better on the stanford dataset than the dblp one, this can be explained with previous observations on social networks showing higher entropy than web graphs~\cite{chierichetti2009compressing}. Notice also how our heuristic algorithm for the large-scale dataset obtains performances (Figure~\ref{fig:result-jaccard-user_list}) comparable to the best datasets.

We observe that the results are very concentrated, in fact, the maximum standard deviation observed for any $k$ value in $[2,64]$ for our smooth (non-smooth) k-anonymity algorithm is 	$0.7\%, 1.3\%$, $0.5\%$, $1.3\%$ ($1.8\%$, $2.2\%$, $0.2\%$, $0.8\%$) for adult, playstore, dblp, stanford, respectively.

\paragraph{Trade-off between suppressed entries and created entries}
We report an extended version of Table~\ref{tab:results-k8} in Table~\ref{tab:results-k8-extended}.

\begin{figure}
\centering
\includegraphics[width=0.50\textwidth,keepaspectratio]{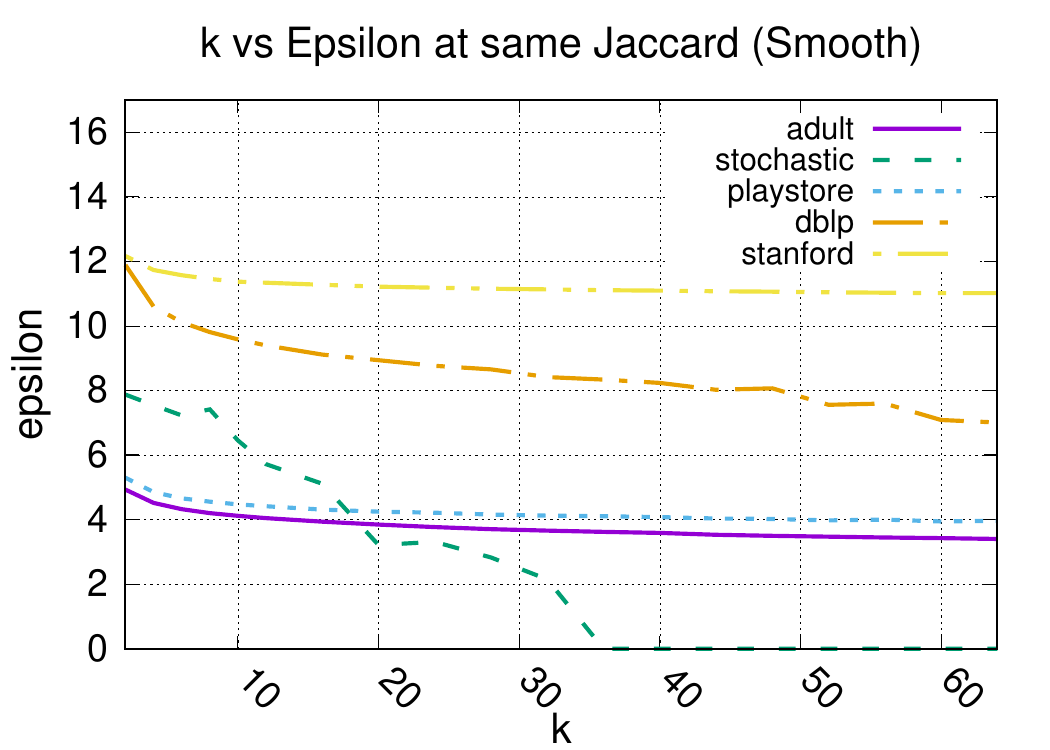}
\caption{$k$ anonymity parameter vs $\epsilon$-edge DP parameter at same level of mean Jaccard Similarity for the various datasets\label{fig:result-k-vs-eps}}.
\end{figure}

\paragraph{Additional results for differential privacy}
All results in our experiments with differential privacy in Figure~\ref{fig:result-eps} are very concentrated around the mean with maximum standard deviation for any $\epsilon$ of $1.02\%$, $0.09\%$, $0.15\%$, $0.05\%$, $0.04\%$ for stochastic, adult, playstore, dblp and stanford, respectively.

\paragraph{Comparison between k-anonymity and differential privacy}
We can use the results of the previous evaluation to compare the two approaches. Notice that $k$-anonymity and $\epsilon$-differential privacy provide mathematically incomparable privacy guarantees; but we can compare the utility of both  for different settings of their parameters $k$ or $\epsilon$. 

This is what we do in Figure~\ref{fig:result-k-vs-eps} where we report for each dataset and value of $k$ for smooth $k$-anonymity and $\epsilon$ parameter than will result in the same average Jaccard similarity for edge differential privacy protection. Notice how, depending on the dataset, an anonymity level of $k=16$ might require an $\epsilon$ as large as $11$ to obtain the same utility even for the lower level of guarantee of edge differential privacy. For the denser datasets like adult and playstore, values of anonymity between $k=4$ and $k=64$ correspond to the performance of $\sim 4 \le \epsilon \le \sim 5$.

\begin{figure}
\centering
\subfigure[DP]{\includegraphics[width=0.5\textwidth,keepaspectratio]{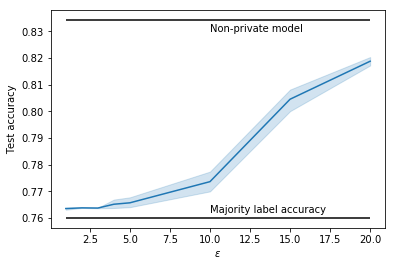}\label{fig:ml-dp}}
\caption{Accuracy in learning task in anonymous data~\label{fig:ml}.}
\end{figure}

\paragraph{Learning from anonymous data}
Finally, we report results on using the anonymized datasets in a downstream machine learning task. For the adult dataset we train a neural net classifier for the standard classification task for this dataset of predicting whether the income of the entry point is $>=\$50k$ per year. For all methods, we only anonymize the features not the labels and we train the model on the anonymized features in output from the algorithms and report the prediction on the original test dataset. We use a 1-layer, 10-hidden unit network with RELU activation nodes. 

The results are reported in Figure~\ref{fig:ml-kanon} for the anonymization methods and in Figure~\ref{fig:ml-dp} for the $\epsilon$-node differentially private method. The shades represent the $95\%$ confidence interval. 
Notice that our algorithm reports results better (or on par) with the best baseline (mondrian) and significantly better than the $\epsilon$-node differentially private algorithm with $\epsilon = 10$ even for $k=200$. We also observe that our smooth algorithm has lower variance in the output than the mondrian one. This confirms that our algorithm outperforms all baselines also in producing outputs useful for machine learning analysis.

\fi

\end{document}